\appto{\bibsetup}{\sloppy} 
\g@addto@macro\bfseries{\boldmath}
\newcommand{\ketbra}[2]{\ket{#1}\!\bra{#2}}
\newcommand{\abs}[1]{|#1|}
\newcommand{\norm}[1]{\lVert#1\rVert}
\theoremstyle{definition}
\newtheorem{theorem}{Theorem}
\newtheorem{lemma}[theorem]{Lemma}
\newtheorem{corollary}[theorem]{Corollary}
\newtheorem{definition}[theorem]{Definition}
\newtheorem{remark}[theorem]{Remark}
\newtheorem{claim}[theorem]{Claim}
\newtheorem{fact}[theorem]{Fact}
\renewcommand{\qedsymbol}{$\blacksquare$}
\renewcommand{\qedsymbol}{\unskip\nobreak\quad\qedsymbol}
\renewcommand{\qedsymbol}{$\blacksquare$}
\definecolor{dred}  {RGB}{164,12,52}
\newcommand{\poly}{{\mathrm{poly}}}
\renewcommand{\Pr}{\mathop{\rm Pr}\nolimits}
\newcommand{\ontop}[2]{{\begin{array}{l} {#1} \\ {#2} \end{array}}}
\newcommand{\stackket}[2]{{\Bigg | \hskip -5pt \ontop{#1}{#2} \hskip -3pt  \Bigg \rangle}}
\newcommand{\stackbra}[2]{{\Bigg \langle \hskip -5pt \ontop{#1}{#2} \hskip -3pt  \Bigg |}}
\newcommand{\stackketbra}[2]{{ \stackket{#1}{#2} \stackbra{#1}{#2} }}
\newlength{\bigcirclen}
\def\statea{{\bigcirc}}
\def\stateda{\begin{tikzpicture}[baseline={([yshift=-.5ex]current bounding box.center)},vertex/.style={anchor=base,
    circle,fill=black!25,minimum size=18pt,inner sep=2pt}]
    \node[draw, rectangle] (box) at (0, 0) {
        $\statea \statea$
    };
\end{tikzpicture}}
\def\stateba{{\settowidth{\bigcirclen}{$\bigcirc$}\makebox[0pt][l]{\makebox[\bigcirclen][c]{$\uparrow$}}\bigcirc}}
\def\encircledlflag{{\settowidth{\bigcirclen}{$\bigcirc$}\makebox[0pt][l]{\makebox[\bigcirclen][c]{$\twoheadleftarrow$}}\bigcirc}}
\def\encircledrflag{{\settowidth{\bigcirclen}{$\bigcirc$}\makebox[0pt][l]{\makebox[\bigcirclen][c]{$\Rrightarrow$}}\bigcirc}}
\def\statebhr{{\settowidth{\bigcirclen}{$\bigcirc$}\makebox[0pt][l]{\makebox[\bigcirclen][c]{$\rightarrow$}}\bigcirc}}
\def\statebb{{\settowidth{\bigcirclen}{$\bigcirc$}\makebox[0pt][l]{\makebox[\bigcirclen][c]{$\downarrow$}}\bigcirc}}
\def\stateb{{\settowidth{\bigcirclen}{$\bigcirc$}\makebox[0pt][l]{\makebox[\bigcirclen][c]{$\uparrow$}}\makebox[0pt][l]{\makebox[\bigcirclen][c]{$\downarrow$}}\bigcirc}}
\def\statebl{{\settowidth{\bigcirclen}{$\bigcirc$}\makebox[0pt][l]{\makebox[\bigcirclen][c]{$\leftarrow$}}\bigcirc}}
\def\statebr{{\settowidth{\bigcirclen}{$\bigcirc$}\makebox[0pt][l]{\makebox[\bigcirclen][c]{$\rightarrow$}}\bigcirc}}
\def\statebh{{\settowidth{\bigcirclen}{$\bigcirc$}\makebox[0pt][l]{\makebox[\bigcirclen][c]{$\rightarrow$}}\makebox[0pt][l]{\makebox[\bigcirclen][c]{$\leftarrow$}}\bigcirc}}
\def\stateca{{\settowidth{\bigcirclen}{$\bigcirc$}\makebox[0pt][l]{\makebox[\bigcirclen][c]{$\Uparrow$}}\bigcirc}}
\def\statecb{{\settowidth{\bigcirclen}{$\bigcirc$}\makebox[0pt][l]{\makebox[\bigcirclen][c]{$\Downarrow$}}\bigcirc}}
\def\statec{{\settowidth{\bigcirclen}{$\bigcirc$}\makebox[0pt][l]{\makebox[\bigcirclen][c]{\raisebox{.035em}{\fontsize{9}{0}\selectfont$\Updownarrow$}}}\bigcirc}}
\def\statehr{{\settowidth{\bigcirclen}{$\bigcirc$}\makebox[0pt][l]{\makebox[\bigcirclen][c]{$\Rightarrow$}}\bigcirc}}
\def\statehl{{\settowidth{\bigcirclen}{$\bigcirc$}\makebox[0pt][l]{\makebox[\bigcirclen][c]{$\Leftarrow$}}\bigcirc}}
\def\statehrl{{\settowidth{\bigcirclen}{$\bigcirc$}\makebox[0pt][l]{\makebox[\bigcirclen][c]{\raisebox{.035em}{\fontsize{9}{0}\selectfont$\Leftrightarrow$}}}\bigcirc}}
\def\dstatebb{
\begin{tikzpicture}[baseline={([yshift=-.5ex]current bounding box.center)},vertex/.style={anchor=base,
    circle,fill=black!25,minimum size=18pt,inner sep=2pt}]
    \node[draw, rectangle] (box) at (0, 0) {
        $\stateb \statebh$
    };
\end{tikzpicture}
}
\def\dstatecc{
\begin{tikzpicture}[baseline={([yshift=-.5ex]current bounding box.center)},vertex/.style={anchor=base,
    circle,fill=black!25,minimum size=18pt,inner sep=2pt}]
    \node[draw, rectangle] (box) at (0, 0) {
        $\statec \statehrl$
    };
\end{tikzpicture}
}
\def\dstatecb{
\begin{tikzpicture}[baseline={([yshift=-.5ex]current bounding box.center)},vertex/.style={anchor=base,
    circle,fill=black!25,minimum size=18pt,inner sep=2pt}]
    \node[draw, rectangle] (box) at (0, 0) {
        $\statec \statebh$
    };
\end{tikzpicture}
}
\def\dstated{
\begin{tikzpicture}[baseline={([yshift=-.5ex]current bounding box.center)},vertex/.style={anchor=base,
    circle,fill=black!25,minimum size=18pt,inner sep=2pt}]
    \node[draw, rectangle] (box) at (0, 0) {
        $\stated \stated$
    };
\end{tikzpicture}
}
\def\statecahr{
\begin{tikzpicture}[baseline={([yshift=-.5ex]current bounding box.center)},vertex/.style={anchor=base,
    circle,fill=black!25,minimum size=18pt,inner sep=2pt}]
    \node[draw, rectangle] (box) at (0, 0) {
        $\stateca \statehr$
    };
\end{tikzpicture}
}
\def\statecabhr{
\begin{tikzpicture}[baseline={([yshift=-.5ex]current bounding box.center)},vertex/.style={anchor=base,
    circle,fill=black!25,minimum size=18pt,inner sep=2pt}]
    \node[draw, rectangle] (box) at (0, 0) {
        $\stateca \statebhr$
    };
\end{tikzpicture}
}
\def\statebabhr{
\begin{tikzpicture}[baseline={([yshift=-.5ex]current bounding box.center)},vertex/.style={anchor=base,
    circle,fill=black!25,minimum size=18pt,inner sep=2pt}]
    \node[draw, rectangle] (box) at (0, 0) {
        $\stateba \statebhr$
    };
\end{tikzpicture}
}
\def\statecbhl{\begin{tikzpicture}[baseline={([yshift=-.5ex]current bounding box.center)},vertex/.style={anchor=base,
    circle,fill=black!25,minimum size=18pt,inner sep=2pt}]
    \node[draw, rectangle] (box) at (0, 0) {
        $\statecb \statehl$
    };
\end{tikzpicture}}
\def\statecbhbl{\begin{tikzpicture}[baseline={([yshift=-.5ex]current bounding box.center)},vertex/.style={anchor=base,
    circle,fill=black!25,minimum size=18pt,inner sep=2pt}]
    \node[draw, rectangle] (box) at (0, 0) {
        $\statecb \statebl$
    };
\end{tikzpicture}}
\def\statebbhl{
\begin{tikzpicture}[baseline={([yshift=-.5ex]current bounding box.center)},vertex/.style={anchor=base,
    circle,fill=black!25,minimum size=18pt,inner sep=2pt}]
    \node[draw, rectangle] (box) at (0, 0) {
        $\statebb \statebl$
    };
\end{tikzpicture}
}
\def\stated{{\settowidth{\bigcirclen}{$\bigcirc$}
\makebox[0pt][l]{\makebox[\bigcirclen][c]{$\times$}}\bigcirc}}
\def\fstateea{
\begin{tikzpicture}[baseline={([yshift=-.5ex]current bounding box.center)},vertex/.style={anchor=base,
    circle,fill=black!25,minimum size=18pt,inner sep=2pt}]
    \node[draw, rectangle] (box) at (0, 0) {
        $\stateelse \statea$
    };
\end{tikzpicture}
}
\def\fstateae{
\begin{tikzpicture}[baseline={([yshift=-.5ex]current bounding box.center)},vertex/.style={anchor=base,
    circle,fill=black!25,minimum size=18pt,inner sep=2pt}]
    \node[draw, rectangle] (box) at (0, 0) {
        $\statea \stateelse$
    };
\end{tikzpicture}
}
\def\stateelse{{\settowidth{\bigcirclen}{$\bigcirc$}\makebox[0pt][l]{\makebox[\bigcirclen][c]{$\bullet$}}\bigcirc}}
\def\fstateelse{
\begin{tikzpicture}[baseline={([yshift=-.5ex]current bounding box.center)},vertex/.style={anchor=base,
    circle,fill=black!25,minimum size=18pt,inner sep=2pt, opacity=0.2}]
    \node[draw, rectangle] (box) at (0, 0) {
        $\stateelse \stateelse$
    };
\end{tikzpicture}
}
\def\fstatede{
\begin{tikzpicture}[baseline={([yshift=-.5ex]current bounding box.center)},vertex/.style={anchor=base,
    circle,fill=black!25,minimum size=18pt,inner sep=2pt}]
    \node[draw, rectangle] (box) at (0, 0) {
        $\stated \stateelse$
    };
\end{tikzpicture}
}
\def\fstateed{
\begin{tikzpicture}[baseline={([yshift=-.5ex]current bounding box.center)},vertex/.style={anchor=base,
    circle,fill=black!25,minimum size=18pt,inner sep=2pt}]
    \node[draw, rectangle] (box) at (0, 0) {
        $\stateelse \stated$
    };
\end{tikzpicture}
}
\def\fdstatecb{
\begin{tikzpicture}[baseline={([yshift=-.5ex]current bounding box.center)},vertex/.style={anchor=base,
    circle,fill=black!25,minimum size=18pt,inner sep=2pt}]
    \node[draw, rectangle] (box) at (0, 0) {
        $\stateb \statehrl$
    };
\end{tikzpicture}
}
\def\turnflag{
\begin{tikzpicture}[baseline={([yshift=-.5ex]current bounding box.center)},vertex/.style={anchor=base,
    circle,fill=black!25,minimum size=18pt,inner sep=2pt}]
    \node[draw, rectangle] (box) at (0, 0) {
        $\hspace{5.5pt}\circlearrowleft \hspace{5.5pt}$
    };
\end{tikzpicture}
}
\def\statebhboxed{
\begin{tikzpicture}[baseline={([yshift=-.5ex]current bounding box.center)},vertex/.style={anchor=base,
    circle,fill=black!25,minimum size=18pt,inner sep=2pt}]
    \node[draw, rectangle] (box) at (0, 0) {
        $\hspace{5.5pt} \statebh \hspace{5.5pt}$
    };
\end{tikzpicture}
}
\def\statehrlboxed{
\begin{tikzpicture}[baseline={([yshift=-.5ex]current bounding box.center)},vertex/.style={anchor=base,
    circle,fill=black!25,minimum size=18pt,inner sep=2pt}]
    \node[draw, rectangle] (box) at (0, 0) {
        $\hspace{5.5pt} \statehrl\hspace{5.5pt}$
    };
\end{tikzpicture}
}
\def\statedboxed{
\begin{tikzpicture}[baseline={([yshift=-.5ex]current bounding box.center)},vertex/.style={anchor=base,
    circle,fill=black!25,minimum size=18pt,inner sep=2pt}]
    \node[draw, rectangle] (box) at (0, 0) {
        $\hspace{5.5pt}\stated\hspace{5.5pt}$
    };
\end{tikzpicture}
}
\def\stateaboxed{
\begin{tikzpicture}[baseline={([yshift=-.5ex]current bounding box.center)},vertex/.style={anchor=base,
    circle,fill=black!25,minimum size=18pt,inner sep=2pt}]
    \node[draw, rectangle] (box) at (0, 0) {
        $\hspace{5.5pt}\statea\hspace{5.5pt}$
    };
\end{tikzpicture}
}
\def\statecboxed{
\begin{tikzpicture}[baseline={([yshift=-.5ex]current bounding box.center)},vertex/.style={anchor=base,
    circle,fill=black!25,minimum size=18pt,inner sep=2pt}]
    \node[draw, rectangle] (box) at (0, 0) {
        $\hspace{5.5pt}\statec\hspace{5.5pt}$
    };
\end{tikzpicture}
}
\def\statebboxed{
\begin{tikzpicture}[baseline={([yshift=-.5ex]current bounding box.center)},vertex/.style={anchor=base,
    circle,fill=black!25,minimum size=18pt,inner sep=2pt}]
    \node[draw, rectangle] (box) at (0, 0) {
        $\hspace{5.5pt}\stateb\hspace{5.5pt}$
    };
\end{tikzpicture}
}
\def\lflag{
\begin{tikzpicture}[baseline={([yshift=-.5ex]current bounding box.center)},vertex/.style={anchor=base,
    circle,fill=black!25,minimum size=18pt,inner sep=2pt}]
    \node[draw, rectangle] (box) at (0, 0) {
        $\hspace{5.5pt} \encircledlflag \hspace{5.5pt}$
    };
\end{tikzpicture}
}
\def\rflag{
\begin{tikzpicture}[baseline={([yshift=-.5ex]current bounding box.center)},vertex/.style={anchor=base,
    circle,fill=black!25,minimum size=18pt,inner sep=2pt}]
    \node[draw, rectangle] (box) at (0, 0) {
        $\hspace{5.5pt} \encircledrflag \hspace{5.5pt}$
    };
\end{tikzpicture}
}
\def\stateelseboxed{
\begin{tikzpicture}[baseline={([yshift=-.5ex]current bounding box.center)},vertex/.style={anchor=base,
    circle,fill=black!25,minimum size=18pt,inner sep=2pt}]
    \node[draw, rectangle] (box) at (0, 0) {
        $\hspace{5.5pt} \stateelse \hspace{5.5pt}$
    };
\end{tikzpicture}
}
\title{Complexity of geometrically local stoquastic Hamiltonians}
\begin{document}

\author{Asad Raza}

\affiliation{Dahlem Center for Complex Quantum Systems, Freie Universit\"at Berlin, 14195 Berlin, Germany}

\affiliation{Sorbonne Universit{\'e}, CNRS, LIP6, 75005 Paris, France}

\author{Jens Eisert}

\affiliation{Dahlem Center for Complex Quantum Systems, Freie Universit\"at Berlin, 14195 Berlin, Germany}

\affiliation{Helmholtz-Zentrum Berlin f{\"u}r Materialien und Energie, 14109 Berlin, Germany}

\author{Alex B. Grilo}

\affiliation{Sorbonne Universit{\'e}, CNRS, LIP6, 75005 Paris, France}
\maketitle

\begin{abstract}
The QMA-completeness of the local Hamiltonian problem is a landmark result of the field of Hamiltonian complexity that studies the computational complexity of problems in quantum many-body physics. Since its proposal, substantial effort has been invested in better understanding the problem for physically motivated important families of Hamiltonians. In particular, the QMA-completeness of approximating the ground state energy of local Hamiltonians has been extended to the case where the Hamiltonians are geometrically local in one and two spatial dimensions. Among those physically motivated Hamiltonians, stoquastic Hamiltonians play a particularly crucial role, as they constitute the manifestly sign-free Hamiltonians in Monte Carlo approaches. Interestingly, for such Hamiltonians, the problem at hand becomes more ``classical'', being hard for the class MA (the randomized version of NP) and its complexity has tight connections with derandomization. In this work, we prove that both the two- and one-dimensional geometrically local analogues remain MA-hard with high enough qudit dimension. Moreover, we show that related problems are StoqMA-complete.
\end{abstract}

\section{Introduction}

Understanding the properties of the ground state of a Hamiltonian is a fundamental problem in physics. Indeed, much of the low temperature physics can be derived from such ground state properties of interacting Hamiltonians. While a number of approximate methods are known for this purpose and the study of ground states constitutes a large field of research within condensed matter physics,
this is widely believed to be a computationally hard task in general. 
Kitaev \cite{KitaevClassicalQuantumComputation} formalized this belief, by proving the hardness of estimating the energy of the ground state of a Hamiltonian. More formally, Kitaev defined the \emph{local Hamiltonian problem} (LH-MIN), which asks to decide whether a local Hamiltonian on $n$ qubits 
(or particles with local dimension\footnote{Following previous results, we interchangeably use the terms {\em dimension $d$} and {\em $d$-state} to denote the local dimension of particles.} $d>2$) has ground state energy less than some real parameter $\alpha$ or
greater than a parameter $\beta$.  Kitaev showed that the LH-MIN problem, for
some $\beta - \alpha \geq {1}/{\rm poly}(n)$,  is QMA-complete, where
QMA is a quantum version of the complexity class NP. However, the Hamiltonians that arise in such a reduction are
rather artificial and,  in order to address this issue, the result of Kitaev has been
extended to several physically relevant Hamiltonians
\cite{kempe-kitaev-regev-2-local,kempe-regev-3-local,NagajMozes,PhysRevA.76.030307,Oliviera-Terhal-geometric-2D-QMA,Aharonov_2009}, in particular to Hamiltonians that are geometrically local, i.e., their particles lie in a lattice.
Such geometrically local Hamiltonians are ubiquitous in physics and make up essentially all models for
interacting quantum matter. A conclusion to be drawn from this is that fault tolerant quantum computer should not be able to efficiently estimate the ground state energy of local Hamiltonians to a high precision, and there is no 
superpolynomial quantum advantage \cite{RevModPhys.95.035001},
which may be seen as a remarkable insight in its own right. This is an example of the many results at the intersection of  theoretical computer science and quantum many-body theory, in particular it lies in the interdisciplinary field of Hamiltonian complexity \cite{HamiltonianComplexitySev,HamiltonianComplexityTobias} that studies the computational complexity of Hamiltonians arising in the context of condensed matter physics
\cite{HamiltonianComplexitySev,HamiltonianComplexityTobias}.

While we have such hardness results for general Hamiltonians, it is common that in physical systems such Hamiltonians feature
further structure or are equipped with additional notable features 
or symmetries. 
It then makes a lot of sense to ask if there exists special classes of Hamiltonians that admit characterizations in restricted complexity classes. 
A particularly important type of Hamiltonians 
is constituted by so-called \emph{stoquastic Hamiltonians}, which are Hamiltonians whose off-diagonals in the computational basis are non-positive.

While this feature may at first look like a detail,
it is not: stoquastic Hamiltonians are Hamiltonians that
avoid the so-called Monte Carlo sign problem. If Hamiltonians
are non-stoquastic and hence feature positive terms on the off-diagonals,
Monte Carlo sampling methods --- so methods that are a workhorse of the study
of quantum many-body systems 
\cite{PhysRevB.26.5033,TrebstMonteCarlo,PolletMonteCarlo} --- come with an overhead that scales
exponentially in the system size \cite{PhysRevLett.94.170201,SignEasing}. In this sense, stoquastic Hamiltonians are considered the ``easy Hamiltonians''
from the perspective of quantum Monte Carlo simulations.
A special case of such stoquastic Hamiltonians consists of terms where all off-diagonal terms 
vanishing, which corresponds to the classical $k$-MAXSAT 
problem, which is
NP-complete.

The seminal result of Bravyi and Terhal~\cite{bravyi2010complexity} settled
the complexity of \emph{stoquastic LH-MIN problem}, showing that it is
MA-complete.\footnote{The
name MA abbreviates Merlin-Arthur, and it corresponds to the probabilistic
version of NP.  More concretely, we have a randomized algorithm that verifies an
untrusted proof and for yes-instances, there must exist a proof such that the
verification algorithm accepts with high probability, whereas for no-instances,
all proofs should lead to low-probability acceptance.}
We remark that the complexity class MA was introduced by
Babai \cite{MA} in the 80s and the result of Bravyi and Terhal
\cite{bravyi2010complexity} showed the very first natural MA-complete problem. It is interesting to
 see 
 that this gives rise to a 
class of quantum Hamiltonians 
that do not require `quantum resources'
for the verification of the 
ground state energy. Instead, a mere classical verification suffices to decide between the upper and lower bounds on the ground state energy in this reading of the local Hamiltonian problem.

This (classical) MA-completeness appears to be even more striking when one notes that stoquastic local Hamiltonians capture inherently quantum processes such as ferromagnetic Heisenberg models and the quantum transverse Ising model, among others, so models that feature strongly in condensed matter and statistical physics. In some ways, stoquastic Hamiltonians 
can be seen as being borderline classical. At the same time, a superpolynomial oracle separation has been shown between the power of adiabatic quantum computation with no sign problem and the power of classical computation
\cite{gilyen2021sub}. For all of the above reasons, one can reasonably argue that stoquasticity is critical to understanding the distinction between quantum and
classical complexity. 

We also note that  Aharonov and
Grilo~\cite{aharonov2019stoquastic} have proven that
the stoquastic LH-MIN problem for $\alpha = 0$ and $\beta = \Omega(1)$ is in NP.
This shows a surprising connection between gap amplification for stoquastic Hamiltonians,
which is a restricted version of quantum PCPs~\cite{aharonov2019stoquastic}, 
and the derandomization conjecture that 
MA = NP.  While this gives a potential avenue to study such an important problem in classical complexity theory, 
a big difficulty in such direction is the small number of
MA-complete problems~\cite{aharonov2021two, jiang2023local, bravyiTIM, bravyi2006complexity, bravyi2006merlinarthur}, compared to the plethora of NP-complete problems. In particular, having more {\em structured}
MA-complete problems could give us new handles to solve open problems involving stoquastic Hamiltonians.

With this motivation in mind, we ask whether the
MA-hardness of the algebraically local stoquastic Hamiltonian problem extends to Hamiltonians that are geometrically local---and hence
relate closely to structured Hamiltonians that would naturally be encountered
in the physical context. After all, interacting terms almost
exclusively act in such a geometrically local fashion in
physically plausible models; in fact, 
ubiquitous are terms that are not only geometrically
local but that involve nearest neighbours only. We notice that the inclusion in
MA follows directly from the result of Bravyi and Terhal \cite{bravyi2010complexity}, and the remaining
question is if these Hamiltonians are also MA-hard.

In this work, we 
answer this question to the affirmative.
Concretely, we
provide two separate constructions (for
models in one and two spatial dimensions) that show that geometrically-local
stoquastic LH-MIN problem  is MA-hard.
Interestingly, even in this
substantially more physically plausible reading of the 
stoquastic local 
Hamiltonian problem, MA-hardness is retained.
This work hence contributes to the broader research question of distinguishing quantum from classical complexity of Hamiltonians, in a context that 
is provided relevance because of the crucially important role  stoquasticity plays in Monte Carlo sampling methods \cite{PhysRevLett.94.170201,PolletMonteCarlo,CuringSignProblem,SignEasing}.

\subsection{Our results and technical contributions}
\label{sec:intro-technical}

The main contribution of this work is to prove 
MA-hardness of geometrically local stoquastic Hamiltonians. We stress again that most
Hamiltonians that are physically plausible feature such
short-ranged (mostly nearest neighbour) and hence geometrically local 
interacting terms. Models of quantum materials 
and toy models in condensed matter physics 
in particular are almost exclusively of this form.
For this reason, studying geometrically local models has been
a key aim early on in the field of Hamiltonian complexity.
Our first contribution is to prove the MA-hardness of stoquastic local Hamiltonians on a two-dimensional lattice. We note that Bravyi and Terhal \cite{bravyi2010complexity} already showed that if a Hamiltonian, in addition to being stoquastic, is also frustration-free (meaning that the ground state of the full Hamiltonian is also the simultaneous ground state of each of the local Hamiltonian terms), then the LH-MIN problem for such Hamiltonians is also contained in MA. Using this inclusion in MA due to Bravyi and Terhal \cite{bravyi2010complexity}, we state following two theorems establishing MA-completeness of stoquastic {\em frustration-free} Hamiltonians. 

\begin{theorem}[MA-completeness of the 
geometrically local, frustration-free stoquastic Hamiltonian problem on a two-dimensional lattice]\label{theorem:2d-ma-hard}
The frustration-free stoquastic local Hamiltonian problem with 14-dimensional qudits on a two-dimensional lattice with geometrically two-local interactions is MA-complete.
\end{theorem}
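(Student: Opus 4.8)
The plan is to use a circuit-to-Hamiltonian construction in which the computational history of an MA verifier is laid out along a one-dimensional ``snake'' that is folded into the two-dimensional lattice, so that every interaction used to propagate and to check the computation couples only nearest-neighbour qudits. Inclusion in MA is already guaranteed by \cite{bravyi2010complexity}, so the entire content is the hardness direction, and I would run the reduction directly from a uniform \emph{stochastic} verifier circuit for an MA language whose gates are nonnegative (reversible classical gates together with nonnegative coin operations) rather than from the algebraically local stoquastic problem --- this avoids having to embed an arbitrary non-local Hamiltonian by perturbative gadgets. The target Hamiltonian has the familiar decomposition $H = H_{\init} + H_{\rules} + H_{\final} + H_{\clock}$, where $H_{\init}$ and $H_{\final}$ penalise wrong inputs and non-accepting outputs, $H_{\clock}$ penalises illegal clock/head configurations, and $H_{\rules}$ propagates the computation one step at a time.

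The next step is to fix the state space: each of the $14$ internal states of a qudit encodes a combination of (i) whether the cell is already active, (ii) the presence and direction of the computation head/signal (the arrow-decorated states), (iii) directional turn markers at the lattice boundary (the $\mathcal{R}$, $\mathcal{L}$, $\circlearrowleft$ flags), and (iv) the classical data bit carried by the cell. The signal sweeps the lattice in boustrophedon fashion --- left-to-right along a row, turning downward and reversing on the next row --- applying one verifier gate at each step; the turn flags are exactly what lets a strictly nearest-neighbour rule realise the ``end of row, descend, reverse direction'' move without any long-range coupling. Stoquasticity is then maintained by construction: every transition rule is written as a Hermitian term $-\left(\ket{\mathrm{new}}\!\bra{\mathrm{old}} + \ket{\mathrm{old}}\!\bra{\mathrm{new}}\right)$ on a neighbouring pair, whose off-diagonal entries are $-1$ and hence nonpositive, while $H_{\init}, H_{\final}, H_{\clock}$ are diagonal projectors contributing nothing off-diagonal.

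The genuinely delicate point is the randomness of the verifier: a coin flip cannot be a Hadamard, which has a negative entry and would spoil the sign structure, so I would implement each coin flip as a nonnegative \emph{branching} of the history --- a transition forking the clock path into two equally weighted successors --- so that on the legal subspace cut out by $H_{\clock}$ the operator $H_{\rules}$ is (unitarily equivalent to) minus a weighted stochastic adjacency operator, i.e.\ a graph Laplacian up to a diagonal shift. By Perron--Frobenius its ground state is then the nonnegative ``history state'', the suitably weighted superposition over the verifier's run, and the history graph being a tree/path of length $T=\poly(n)$ gives an inverse-polynomial spectral gap. Combining this with the diagonal penalties through Kitaev's geometric lemma \cite{KitaevClassicalQuantumComputation}, exactly as in the geometrically local QMA constructions \cite{Oliviera-Terhal-geometric-2D-QMA, Aharonov_2009}, yields the energy separation: for a yes-instance a good witness produces a history state of energy below $\alpha$, while for a no-instance the bounded-error soundness of the verifier together with the gap forces the energy above $\beta$ with $\beta-\alpha \geq 1/\poly(n)$.

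I expect the main obstacle to be achieving geometric locality, stoquasticity, and the correct handling of randomness \emph{simultaneously}. Perturbative gadgets of Oliveira--Terhal type cannot simply be invoked because they generically introduce positive off-diagonal terms and would break the sign structure; the locality must therefore be ``native'' to the grid-walk. Consequently the two features that are usually handled by auxiliary machinery --- the boundary turns of the signal and the coin-flip branchings --- must both be realised by strictly nearest-neighbour, nonnegative-amplitude rules, and one must still verify that the resulting $H_{\rules}$ has a nondegenerate nonnegative history state separated by an inverse-polynomial gap, so that the Perron--Frobenius picture underpinning both completeness and soundness survives the folding onto the lattice.
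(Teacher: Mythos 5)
Your overall architecture (the decomposition $H = H_{\rm Init} + H_{\rm Prop} + H_{\rm Final} + H_{\rm Penalty}$, stoquasticity via transition terms of the form $-\ketbra{\mathrm{new}}{\mathrm{old}} + \mathrm{h.c.}$ with diagonal penalties, inclusion in MA from Bravyi--Terhal, and Kitaev's geometric lemma for soundness) matches the paper's, and the sweeping-signal clock is in the right spirit. However, there are two genuine gaps.

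First, and most seriously, your construction cannot implement a universal classical reversible gate set. You allot each cell a \emph{single} classical data bit, and every rule couples only two neighbouring cells, so the only reversible classical gates you can realise are two-bit gates (CNOT, swap, NOT); these generate only affine permutations and are \emph{not} universal for classical computation. Universality requires a three-bit gate such as the Toffoli, and a Toffoli acting on three bits held by three distinct cells is irreducibly three-local in your layout. This is precisely the central technical obstacle the paper identifies and resolves: it packs \emph{two} logical qubits into each $14$-dimensional qudit (the $14 = 1+1+4+4+4$ decomposition of \cref{eq-Hilbert-space-decomp-2}), so that a three-qubit Toffoli touches only two nearest-neighbour particles, as in \cref{eq-2D-unitary-action-example}. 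Without this idea (or an equivalent one), your reduction can only encode verifiers of affine circuits and the hardness proof fails. Your $14$-state inventory (activity flag, head direction, turn markers, one data bit) does not leave room for it.

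Second, your treatment of randomness by \emph{branching} the history into two equally weighted successors is both riskier than necessary and incompletely justified. Once a transition forks, the propagation operator restricted to the legal subspace is no longer unitarily equivalent to a path random walk, so Kitaev's $\Omega(1/T^2)$ gap bound does not apply as stated; one must separately handle ``dead-end'' states with no legal preimage (e.g.\ a coin in the state $\frac{1}{\sqrt{2}}(\ket{0}-\ket{1})$ just after a fork, which individual forking terms annihilate), and with $\poly(n)$ coin flips the branching history graph has exponentially many leaves, so the Perron--Frobenius gap claim needs a genuine proof, not an appeal to the path case. The paper avoids all of this by working with MA$_{\text{q}}$ (equivalent to MA by Ref.~\cite{bravyi2006complexity}): randomness enters only through input qudits forced into the ``Coin'' state $\frac{1}{\sqrt{2}}\ketbra{\statebabhr}{\statebabhr} + \frac{1}{\sqrt{2}}\ketbra{\statebbhl}{\statebbhl}$ by the diagonal $H_{\rm Init}$, while the circuit itself is a deterministic Toffoli/X circuit. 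All propagation terms are then standard unitary-propagation terms with permutation-matrix off-diagonals, manifestly stoquastic, and the standard spectral analysis carries over unchanged. You should adopt this route: it removes the coin-gate problem entirely rather than solving a harder version of it.
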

We shall give the proof of Theorem~\ref{theorem:2d-ma-hard} over two sections: Completeness (Section \ref{subsection:completeness-proof}) and soundness (Section \ref{subsection:soundness-proof}). 

Then, we follow up and consider MA-completeness of stoquastic local Hamiltonians on a one-dimensional lattice.

\begin{theorem}[MA-completeness of the  geometrically local, frustration-free stoquastic Hamiltonian problem on a line]\label{theorem:1D-MA-hardness}
The frustration-free stoquastic local Hamiltonian problem with 19-dimensional qudits on a one-dimensional line with geometrically two-local interactions is MA-complete.
\end{theorem}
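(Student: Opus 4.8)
The plan is to follow the same circuit-to-Hamiltonian template used for \Cref{theorem:2d-ma-hard}, but to replace the two-dimensional clock-and-data layout by a one-dimensional ``moving cursor'' construction in the spirit of Aharonov, Gottesman, Irani, and Kempe, adapted so that every interaction term remains stoquastic. As in the MA-complete starting point of Bravyi and Terhal, I would reduce from the problem of verifying a randomized (MA) computation whose gates are drawn from a stoquastic-universal set: classical reversible gates, which are permutation matrices and hence have non-negative entries in the computational basis, together with a fixed coin-flip/diffusion gadget chosen so that its matrix is also entrywise non-negative. The point of this choice is that a propagation term of the form $\ket{c_{t+1}}\!\bra{c_{t+1}} + \ket{c_t}\!\bra{c_t} - \ket{c_{t+1}}\!\bra{c_t}\otimes U_t - \ket{c_t}\!\bra{c_{t+1}}\otimes U_t^\dagger$ has only non-positive off-diagonal entries whenever $U_t$ has non-negative entries, so stoquasticity is preserved step by step.

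Concretely, I would place a chain of $19$-dimensional qudits on the line and let each site carry: the value of its data bit, a flag recording whether its gate has already been applied, the position and direction of travel of a single active cursor, and auxiliary flags handling the two endpoints where the cursor must reverse. The clock is implicit --- the global legal configuration of the chain records how far the computation has progressed --- so no separate clock register is needed. The Hamiltonian then decomposes as $H = H_{\mathrm{legal}} + H_{\init} + H_{\mathrm{prop}} + H_{\final}$, where $H_{\mathrm{prop}}$ collects the stoquastic propagation terms above, and $H_{\mathrm{legal}}$, $H_{\init}$, $H_{\final}$ are diagonal penalty terms that (i) forbid locally illegal patterns so that low-energy states live in the subspace with a single well-defined cursor, (ii) enforce the correct input at time zero, and (iii) penalize non-accepting outputs at the final step. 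Being diagonal, the penalty terms are automatically stoquastic, and all terms are geometrically two-local by design.

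For completeness I would exhibit, for a yes-instance, the history state obtained as the Perron--Frobenius ground state of $H_{\mathrm{prop}}$ restricted to the legal subspace: a superposition with non-negative amplitudes over the cursor's entire back-and-forth trajectory, weighted according to the branching of the randomized verifier. This state annihilates $H_{\mathrm{prop}}$, $H_{\mathrm{legal}}$ and $H_{\init}$ exactly and pays only an energy proportional to the rejection probability against $H_{\final}$, giving a ground energy below $\alpha$. For soundness I would first use $H_{\mathrm{legal}}$ to argue that any state of small energy is, up to $O(1/\poly(n))$ error, supported on the legal subspace; on that subspace $H_{\mathrm{prop}}$ is unitarily equivalent to a symmetric random walk on the $\poly(n)$ time steps (a path-graph Laplacian), whose spectral gap is $1/\poly(n)$. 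Combining this gap with the boundary penalties $H_{\init}$ and $H_{\final}$ via Kitaev's geometric lemma then yields the desired $1/\poly(n)$ lower bound on the ground energy whenever the verifier rejects, establishing the promise gap.

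The main obstacle is to keep stoquasticity intact precisely at the points where the one-dimensional geometry forces the cursor to turn around and reverse direction: these endpoint transitions, together with the flag bookkeeping that distinguishes a forward sweep from a backward sweep, must all be realized as non-positive off-diagonal terms while simultaneously leaving the legal subspace exactly pinned down by local diagonal penalties. It is this combined requirement --- stoquastic turnaround rules together with a tightly constrained legal subspace --- that forces the larger local dimension (nineteen rather than the fourteen of \Cref{theorem:2d-ma-hard}), and verifying that the resulting path-Laplacian still has an inverse-polynomial gap is the most delicate part of the soundness analysis.
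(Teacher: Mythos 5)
Your proposal reproduces the paper's overall template (a cursor/active-site sweeping a 1D chain of 19-dimensional qudits, stoquastic propagation from Toffoli/X gates, diagonal penalty terms, history-state completeness, Kitaev's geometric lemma for soundness), but it misses the one idea that makes the 1D case genuinely harder than the 2D case, and your soundness argument breaks exactly there. You assume that $H_{\mathrm{legal}}$ can pin down the legal subspace locally, i.e., that any low-energy state is (up to small error) supported on legal configurations. In one dimension this is false: because reinitializing a block requires moving the data particle-by-particle into the next block, there are globally illegal configurations --- strings of data-carrying qudits whose \emph{length} is wrong relative to the block structure --- in which every two-local window looks perfectly legal. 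Detecting them requires counting the number of non-dead, non-unborn sites, which no sum of geometrically two-local diagonal projectors can do. This is the same obstruction faced by the 1D QMA-hardness construction of Ref.~\cite{Aharonov_2009}, and the paper must deal with it explicitly: it isolates exactly these states (\Cref{fact:legal-qubit-length}, \Cref{claim:exceptions}), and then proves \Cref{lemma:illegal-non-local}, showing that on the subspace of illegal-but-not-locally-checkable states the \emph{combination} $H_{\rm Prop}+H_{\rm Penalty}$ still has energy $\Omega(1/T^3)$, because the propagation rules drive any such configuration into a locally detectable one (a collision with a block boundary) within $O((n')^2 R)$ steps. The final soundness proof is then a three-way case analysis (legal states, locally checkable illegal states, non-locally-checkable illegal states), and without an analogue of \Cref{lemma:illegal-non-local} your argument has no control over the third case: a state supported on wrong-length strings annihilates all your diagonal penalties, and your path-Laplacian gap argument, which presupposes the legal subspace, says nothing about it.

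A secondary problem is your proposed source of randomness: a ``coin-flip/diffusion gadget'' that is both unitary and entrywise non-negative cannot exist except as a permutation matrix (orthonormal rows with non-negative entries must have disjoint supports, forcing one entry per row equal to $1$), so no such gate generates randomness. The paper instead works with MA$_{\text{q}}$, where all gates are Toffoli and X, and the randomness enters through the \emph{initial state}: designated coin qudits are forced by $H_{\rm Init}$ into an equal superposition of two basis shapes, which is compatible with stoquasticity since $H_{\rm Init}$ remains a projector. You would need the same device; with it, and with the missing lemma on non-locally-checkable configurations, your outline would match the paper's proof.
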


Ref.~\cite{bravyi2006merlinarthur} has defined another class, StoqMA, that lies between MA and QMA: MA $\subseteq$ StoqMA $\subseteq$ QMA. Authors in Ref.~\cite{bravyi2006merlinarthur} showed that deciding the general stoquastic local Hamiltonian problem (without frustration-free assumption) with arbitrary $\alpha$ and $\beta$ such that $\beta - \alpha \geq \frac{1}{\poly(n)}$ is StoqMA-complete using a slight variation of the standard circuit-to-Hamiltonian construction. We can also apply a similar idea to our constructions, leading to StoqMA-completeness of geometrically local stoquastic Hamiltonian problem.

\begin{theorem}[StoqMA-completeness  of the 
geometrically local stoquastic Hamiltonian problem]
The stoquastic local Hamiltonian problem with 14-state qudits on a two-dimensional lattice with geometrically two-local interactions is StoqMA-complete. Moreover,
the geometrically two-local stoquastic LH-MIN also stays StoqMA-complete on a one-dimensional line with 19-dimensional particles. See Corollary~\ref{corollary-stoqma-completeness} in Section~\ref{section:stoqma-completeness}.
\end{theorem}

\subsubsection{Technical contributions}

In this subsection, we discuss the technical challenges to achieve our results and how we handle them.
The first geometric construction was introduced in 
Ref.~\cite{aharonov2008adiabatic} for implementing unitary circuits using local Hamiltonians on a
two-dimensional grid with six-dimensional qudits. Subsequently, QMA-completeness on a 
one-dimensional lattice has also been proven to be true for qudits with 
twelve local dimensions \cite{Aharonov_2009} (the 
local particle dimension has later been brought down to 
eight \cite{nagaj-1d}). Moreover, the locality of the two-dimensional case was also reduced to qubits by Oliveira and Terhal \cite{Oliviera-Terhal-geometric-2D-QMA}.
One of the main technical challenges in any geometric construction is to ensure that the Hamiltonian remains geometrically local --- ideally two-local reflecting nearest neighbour interactions --- while respecting the inverse polynomial spectral gap bounds needed to prove completeness and soundness conditions. 
In our case, we must also ensure that each Hamiltonian term $H_j$ in Eq.~\eqref{eq:full-hamilt} is at the same time stoquastic. 

Unlike QMA where the the verification uses arbitrary quantum gates, to prove MA-hardness, we only need to consider classically reversible universal gate set. In particular, we can consider only Toffoli and $X$ gates. While they will help us ensure stoquasticity for each Hamiltonian term, given that they will appear as $-U$ on the off-diagonal blocks in the reduction, using Toffolis comes at a notable cost. Since a Toffoli gate acts on three qubits, the challenge is to realize this gate using only two-local Hamiltonians. Directly applying the existing constructions~\cite{aharonov2008adiabatic,Aharonov_2009} will result in a three-local Hamiltonian.  We overcome this by increasing particle dimension (and thus reducing locality) to be able to handle three logical qubits at a time and ensure that all the terms are only nearest neighbour (two-local) while also being stoquastic.
Needless to say, there are other summands in our Hamiltonian. However, in our construction they are either projectors or matrices that are signed (negative) outer product between two computational basis vectors; both of these are stoquastic. Our goal will be to ensure that each $H_j$ is stoquastic, which  will in turn ensure that the full Hamiltonian
\begin{equation}\label{eq:full-hamilt}
H= \sum_{j=1}^r H_j
\end{equation}
is also stoquastic. The Hamiltonian in Eq.~\eqref{eq:full-hamilt} is defined on $n$
local systems, referred to as ``particles'', of dimension $d$,
called qubits for $d=2$.

Our main technical contribution is twofold: to map the MA verification circuit onto (i) a two-dimensional grid of qudits (\Cref{theorem:2d-ma-hard}),  and (ii) on a one-dimensional line (\Cref{theorem:1D-MA-hardness}). To do so, we uniquely map every state in the circuit verification procedure to an arrangement on a grid (in the two-dimensional case) and later on a line (in the one-dimensional case). On a two-dimensional lattice, the Hilbert space of each particle is represented as 
\begin{equation}\label{eq-Hilbert-space-decomp-2}
\mathcal{H}_{\rm Particle} = \stateda \oplus \dstated \oplus \dstatebb \oplus \dstatecb \oplus \dstatecc, 
\end{equation}
where each of the last three rectangular boxes can be interpreted as the tensor product Hilbert space of two qubits --- which are the circles within the box (See \Cref{subsection:clock-prop} for a detailed explanation.) The first two boxes are simply one-dimensional spaces to keep track of how far we are in the computation and do not carry computational data. The way we implement the three-qubit Toffoli gate in a two-local way is essentially by acting the gate on two of these nearest neighbour particles. The Toffoli gate acts on only eight local dimensions between two neighbouring particles at a time. The application of a Toffoli gate is tracked by changing the single line (denoting that the unitary never touched the qubit) to a double line (denoting that unitary touched the qubit at least once). For example, 
\begin{equation}\label{eq-2D-unitary-action-example}
U_{1,2,3}\left(\stackket{\dstatecc}{\dstatebb} \right) =  \stackket{\dstatecc}{\dstatecb},
\end{equation}
where $U_{1,2,3}$ acts on three qubits numbered $1, 2$ and $3$ (the top two and the bottom left), but only on two nearest neighbour particles at a time. The top two particles already being double-lined denotes that those have been acted upon by a previous (Toffoli) gate. And that the new Toffoli gate acts on one extra untouched particle, i.e., the one on the bottom left.

The overarching idea in the one-dimensional construction also hinges upon this observation of encoding qubits in higher dimensional qudits and then applying Toffolis (via local Hamiltonians) in a two-local way on those higher dimensional particles. The difficulty in the one-dimensional case, however, is that we no longer have a possibility of nearest neighbour interactions across both the horizontal and the vertical, as we had in two spatial dimensions. In the two-dimensional setting, the nearest neighbour interactions across the vertical are used to track gate application, while the nearest neighbour horizontal ones help to initialize the particles in the following column in a state such that they are ready to be acted upon by the new layer of unitaries. Consequently, in the one-dimensional construction, we can no longer directly initialize the next block of particles in a desired state using only two-local interactions. We remedy this by taking polynomially (in the number of qubits of the verification circuit) many additional steps to reinitialize them. 

The upshot, similar to the one-dimensional QMA-hardness construction \cite{Aharonov_2009}, is that we now encounter states that are forbidden configurations and not locally checkable. In the two-dimensional construction, we might also have forbidden arrangements, but we can penalize them using only two local Hamiltonian terms. In the one-dimensional setting, we can no longer detect all the forbidden configurations in a two-local way. This has a direct consequence on the soundness constraint of our problem. We show that the Hilbert space of forbidden states that are not locally checkable admits the same lower bound as the legal states (\Cref{lemma:illegal-non-local}). Note that this does not affect completeness at all because it is merely an existence statement, for which we use the history state of our verification as this state which is a legal state and is entirely unaffected by local (or not local) checks on the forbidden states. 
Finally, we show in Theorem~\ref{theorem:1D-MA-hardness}
the respective hardness on a line.

\subsection{Preliminaries}

In our argument, we build on and further develop a technical machinery that has been developed over the years in the field of Hamiltonian complexity \cite{HamiltonianComplexityTobias,HamiltonianComplexitySev}. Subsequently, we will state a number of important concepts in technical terms.

\begin{definition}[Algebraically local Hamiltonian]\label{definition:k-local-hamiltonian}
A Hamiltonian $H = \sum_{j=1}^r H_j$ is $k$-local on $n$
particles when each $H_j$ acts non-trivially on at most any $k = O(1)$ out of $n$ particles.
\end{definition}

We notice that sometimes in the (computer science) literature, such algebraically local Hamiltonians are also called {\em combinatorially} local Hamiltonians.

\begin{definition}[Local Hamiltonian problem (LH-MIN$_{a,b}$)
]\label{definition:LH-MIN}
For an $n$-qudit Hamiltonian 
$H = \sum_{j=1}^r H_j$, 
where each $0 \preccurlyeq H_j \preccurlyeq \mathbbm{1}$ and each $H_j$ acts non-trivially on $k$ out of $n$ particles, decide whether $\lambda_{\min}(H) \leq a$ or $\lambda_{\min}(H) \geq b$.
\end{definition}

In our case, these particles shall have dimensions $d>2$ and each $H_j$ shall only act on nearest neighbour constituents.
We will call the local quantum systems \emph{particles} and will resort to the sites of the lattice associated with quantum degrees of freedom as \emph{constituents}. 

\begin{definition}[Geometrically local Hamiltonian problem]\label{definition:geometrically-local-LH-MIN}
   The geometrically $k$-local Hamiltonian problem is defined in the same way as its algebraic counterpart (Definition~\ref{definition:k-local-hamiltonian}), except that each Hamiltonian term acts on at most $k$ qubits that are geometric neighbors in the lattice.
\end{definition}
\begin{definition}[NP]\label{definition:NP}
A language $L$ is in NP iff there exists a deterministic polynomial time verification circuit, $C$, such that for every $x \in L_{\rm Yes}$ there exists a witness $w \in \{0, 1\}^{p(|x|)}$ for some polynomial $p$ for which $C(x, w) = 1$ (the verifier accepts). If $x \in L_{\rm No}$, then for all $w \in \{0, 1\}^{p(|x|)}$, $C(x, w) = 0$ (the verifier rejects). 
\end{definition}

We are now in the position to define the complexity class QMA, which is the quantum analogue of NP. Similar to the definition of NP, {\em acceptance} is defined when $\ket{1}$
is measured on the output qubit.

\begin{definition}[QMA]\label{definition:QMA}
A promise problem $L = (L_{\rm Yes}, L_{\rm No})$ is in QMA iff there exist polynomials $p$ and $q$ and an uniform family of quantum circuits $\{U_n\}_n$  such that $U_n$ has $\poly(n)$ size and for every $x \in L_{\rm Yes}$, there exists a witness as a quantum state vector $\ket{\psi}  \in (\mathbbm{C}^2)^{\otimes p(|x|)}$, such that $U_{|x|}(\ket{x}\ket{\psi}\ket{0}^{q(|x|)})$ accepts with probability  at least $2/3$  and if  $x \in L_{\rm No}$, then $\forall \ket{\psi}$, $U_{|x|}(\ket{x}\ket{\psi}\ket{0}^{q(|x|)})$ accepts with probability at most $1/3$.
\end{definition}

Kitaev~\cite{KitaevClassicalQuantumComputation} has shown that there exists some $a,b$ where $b - a = {1}/{\poly(n)}$ such that LH-MIN$_{a,b}$ is QMA-complete. Since then, researchers have tried to find special cases of the LH-MIN problem and characterize their complexity, largely motivated by understanding the impact of physically plausible constraints to the computational 
complexity of the problem. One such case is that of stoquastic Hamiltonians. 

\begin{definition}[Stoquasticity]\label{definition:Stoquasticity}
    A Hamiltonian $H = \sum_{j=1}^r H_j$ is called stoquastic with respect to a basis iff for every $j \in \{1,\dots ,r\}$ and $i \neq j \in \{0, 1\}^n$, $\bra{i}H_j\ket{j} \leq 0$. 
\end{definition}

We can then define the LH-MIN problem where we restrict all the terms to be stoquastic.

\begin{definition}[Stoquastic local Hamiltonian problem (stoquastic LH-MIN)]\label{definition:Stoq-LHMIN}
The stoquastic local Hamiltonian problem is defined in the same way as the local Hamiltonian problem (Definition~\ref{definition:LH-MIN}), except each $H_j$ in the full Hamiltonian $H = \sum_{j=1}^r H_j$ is term-wise stoquastic --- having real non-positive off-diagonals in the computational basis. 
\end{definition}

 Bravyi, DiVincenzo, Oliveira and Terhal \cite{bravyi2006complexity} have proven that algebraically local stoquastic LH-MIN is MA-hard. In order to understand this better, let us recall the definition of classes MA, which is a randomized version of NP.

\begin{definition}[MA]\label{definition:MA}
A promise problem $L = (L_{\rm Yes}, L_{\rm No})$
is in MA
iff there exist polynomials $p,q$ and an uniform family of (classical) circuits $\{C_n\}_n$  such that $C_n$ has $\poly(n)$ size and such that for every $x \in L_{\rm Yes}$, there exists a witness $w \in \{0, 1\}^{p(|x|)}$ such that $\Pr_{r \in \{0,1\}^{q(|x|)}}[C_{|x|}(x, w,r) \text{ accepts }] \geq \frac{2}{3}$. If $x \in L_{\rm No}$, then for all $w \in \{0, 1\}^{p(|x|)}$, $\Pr_{r \in \{0,1\}^{q(|x|)}}[C_{|x|}(x, w,r) \text{ accepts }] \leq \frac{1}{3}$.
\end{definition}

In this work, it is also useful to consider the coherent version of MA, that is called MA$_{\text{q}}$.

\begin{definition}[MA$_{\text{q}}$]\label{definition:MA_q}
A promise problem $L = (L_{\rm Yes}, L_{\rm No})$ 
is in MA$_{\text{q}}$
iff there exist polynomials $p,q,k$ and an uniform family of quantum circuits $\{U_n\}_n$  such that $U_n$ has $\poly(n)$ size and is composed only of Toffoli and X gates, and such that for every $x \in L_{\rm Yes}$, there exists a quantum state vector $\ket{\psi}  \in (\mathbbm{C}^2)^{\otimes p(|x|)}$ such that on the output qubit,  $U_{|x|}(\ket{x}\ket{\psi}\ket{0}^{q(|x|)}\ket{+}^{k(|x|)})$ accepts with probability  at least $2/3$. If $x \in L_{\rm No}$, then $\forall \ket{\psi}$, $U_{|x|}(\ket{x}\ket{\psi}\ket{0}^{q(|x|)}\ket{+}^{k(|x|)})$ accepts with probability at most $1/3$ on the output qubit.
\end{definition}

While it is well-known in classical complexity theory that the class MA equals its one-sided analogue, i.e., MA = MA$_{\text{1}}$ \cite{zachos87, goldreich_another_nodate}, the authors in 
Ref.\ \cite{bravyi2006complexity} have established another equivalence: MA and MA$_{\text{q}}$\footnote{Since the two classes are equivalent, we shall routinely switch between MA and MA$_{\text{q}}$, depending on whichever makes proofs convenient.}, and used that for the direct application of Kitaev's circuit-to-Hamiltonian mapping to prove MA-hardness of algebraically $k$-local stoquastic LH-MIN~\cite{bravyi2006complexity}. However, it is still an open question if stoquastic LH-MIN$_{a,b}$ is contained in MA for arbitrary $a,b$ where $b - a\geq {1}/{\poly(n)}$. Bravyi and Terhal~\cite{bravyi2006complexity} have shown the containment of LH-MIN$_{0,1/\poly(n)}$~\footnote{This problem is usually known as the frustration-free stoquastic Hamiltonian problem.} in MA.

Finally, we define StoqMA, a slight variant of MA$_{\text{q}}$.

\begin{definition}[StoqMA$_{a,b}$] StoqMA is MA$_{\text{q}}$, except that the verification algorithm accepts if the output qubit is in $\ket{+}$ and we consider completeness parameter $a$ and soundness parameter $b$.
\end{definition}

It has been shown that the problem stoquastic
LH-MIN$_{a,b}$ is contained in StoqMA$_{\alpha,\beta}$ and is StoqMA$_{\alpha',\beta'}$-hard, for some $\alpha - \beta \geq {1}/{\poly(n)}$ and 
$\alpha' - \beta' \geq 
{1}/{\poly(n)}$.
We remark that currently, it is not known how to reduce the error probability for StoqMA (if we did, we would have that StoqMA = MA ~\cite{aharonov2020stoqma}), and therefore one has to be careful how the parameters translate on the hardness and containment proofs.

\subsection{Related works}

\paragraph{Hardness of curing and easing
the sign problem.}
It is important to stress that the Hamiltonian that is the input to the stoquastic geometrically local Hamiltonian problem is already manifestly in a stoquastic form. This definition singles out the computational basis in which the off-diagonal terms of the Hamiltonian are 
non-positive. This particular basis in the definition
has no significance, and any other basis can be chosen as well. But as it turns out, for a Hamiltonian with real entries,
the computational task of finding local basis changes,
reflected by local orthogonal conjugations 
\begin{equation}
H\mapsto O^{\otimes n} H (O^\dagger)^{\otimes n} 
\end{equation}
for $O\in O(d)$,
so that the Hamiltonian is stoquastic is a computationally hard problem in its own right. Specifically, for multi-qubit two-local Hamiltonians that  contains one-local terms, then this task can be NP-hard \cite{CuringSignProblem}. For meaningful notions of easing the sign problem that relate to the overhead in Monte Carlo sample complexity, the problem remains an NP-complete task for nearest-neighbour Hamiltonians by a polynomial reduction to the MAXCUT-problem \cite{SignEasing}.

This adds an interesting twist to the problem at hand, as Hamiltonians are commonly not given in a manifestly stoquastic form, but in terms of some decomposition into local terms. So if given some form of a Hamiltonian, the very task that decides whether it can be brought into a stoquastic form in the computational basis as
being defined in the input of the geometrically local stoquastic 
Hamiltonian problem is computationally hard.

\section{MA-hardness of 2-local 14-state stoquastic Hamiltonians in two dimensions}\label{sec:2d-ma-hardness}

In this section, we introduce and describe our construction for 14-state 2-local Hamiltonian on a two-dimensional lattice that is MA-hard. It follows closely the construction of Ref.~\cite{Aharonov_2009}, but, as we discussed in Section~\ref{sec:intro-technical}, needs to handle three-qubit Toffoli gates. This requires a new geometric construction that uniquely maps the MA$_{\text{q}}$ verification circuit to the shape of the grid. To do so, we will first modify the MA$_{\text{q}}$ circuit, without loss of generality, to the one shown in Fig.~\ref{fig:circuit-construction}. Then we encode the execution and consequently acceptance/rejection of this circuit into Hamiltonian terms. In doing so, we will ensure that the each Hamiltonian term stays geometrically two-local and stoquastic. 

\subsection{Structure of verification circuit}
\label{sec:layers}
In our circuit-to-Hamiltonian mapping, we will need more structure on the layout of the circuit (as in Ref.\ \cite{Aharonov_2009}).
In particular, we require the $n'$-qubit verification circuit to be composed of layers, where we alternate a {\em computational layer} and an {\em identity layer}. In each {\em computational} layer, the first two 1- and 2- qubit gates (respectively) are chosen to be identity gates in order to make the exposition simpler. The remaining $n'-2$ gates in the computational layer are $3$-qubit gates, where the $i$-th gate acts on qubits $i,i+1,i+2$ for $i \in [n' - 2]$. 
Since we exclusively work with MA$_{\text{q}}$ verification, we are limited to classical computation (for example, we can consider the case where we have the product of X and Toffoli gates). In the identity layer, we have $n'/2$ identity gates, where the $i$-th gate of the layer acts on qudits $2i-1, 2i$, where $i \in [n'/2]$. We depict one computational and one identity layer in Figure~\ref{fig:circuit-construction}. 

We notice that we can assume that the MA$_{\text{q}}$ verification circuit has this structure by increasing its size by only a polynomial factor. We assume that the circuit has an even number of qubits (and we can add an extra dummy qubit if this is not the case).

\begin{figure}[ht]
    \centering
    \includegraphics[scale=0.5]{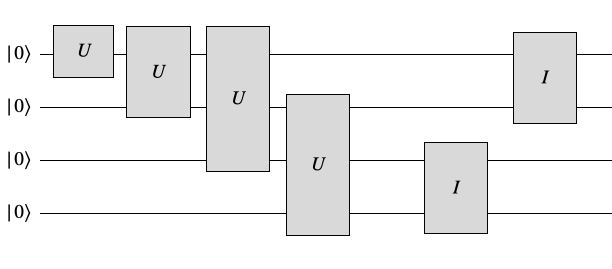}
    \caption{Modified circuit for our circuit-to-Hamiltonian mapping. We first make sure that the number of qubits in our verification circuit is even  (by adding an auxiliary, if needed). This ensures we can apply $n'/2$ many 2-qubit identity gates. Each unitary $U$ here is either a Toffoli gate or an identity gate.} 
    \label{fig:circuit-construction}
\end{figure}

\subsection{Clock construction}

\subsubsection{Kitaev's construction}
Since Kitaev's original proof, all the circuit-to-Hamiltonian constructions heavily rely on a `clock register' to track when each gate of the verification circuit is applied. More concretely, let $\ket{\phi_0}$ be the input state vector of the verification circuit that is a product state of the state of the input qubits, auxiliary qubits, and the witness qubits. The so-called history state, which is crucially used to prove completeness condition for (Q)MA-hardness is the uniform superposition 
\begin{align}\label{eq:kitaev}
    \ket{\Phi} = \sum_{t = 0}^T U_t U_{t-1} \cdots U_1\ket{\phi_0} \otimes \ket{t}
\end{align}
of the `snapshots' of entire computation of the verification circuit, where $\ket{t}$ is the clock register that indicates the time (encoded in unary representation) corresponding to the `snapshot'. It is due to this clock register that such proof construction becomes geometrically non-local as we show by quoting Kitaev's propagation Hamiltonian construction in Eq.~\eqref{Eq-Kitaev-H_prop}. 

To ensure that low-energy states have the form Eq.~\eqref{eq:kitaev}, it is crucial to check at every time-step that the computation on the initial state is correctly propagated according to the local gates applied in the verification circuit.  In Kitaev's construction \cite{KitaevClassicalQuantumComputation}, we have the so-called {\em propagation terms} $H_{Prop}$ that ensure that the desired gate is applied when ``the clock ticks''. The propagation term from step $t-1$ to $t$ is done using the following Hamiltonian 
\begin{equation}\label{Eq-Kitaev-H_prop}
    H_{\rm Prop}^{\rm Kitaev} = \frac{1}{2}(\mathbbm{1}\otimes \ketbra{t}{t} + \mathbbm{1}\otimes \ketbra{t-1}{t-1} - U_t \otimes \ketbra{t}{t-1} - U_t^{\dagger}\otimes \ketbra{t-1}{t}).
\end{equation}
Even if $U_t$ is applied to any two neighbouring qubits in the \emph{computational} registers, it is not possible to make it geometrically local since they would be far from the \emph{clock} registers (denoted by $\ket{t}$ and $\ket{t-1}$). This construction therefore does not yield geometrically local Hamiltonians.

Recall that in Kitaev's construction, we have
\begin{equation}\label{eq:kitaev-full-hamilt}
    H = H_{\rm In} + H_{\rm Prop} + H_{\rm Out}
\end{equation}
and $H$ also contains two other terms, $H_{\rm In}$ and $H_{\rm Out}$. While $H_{\rm Out}$ can be made geometrically local by placing the last clock qubit and the output qubit side-by-side, $H_{\rm In}$ does not yield a geometrically local Hamiltonian. For an $n$-qubit input $\ket{x}$
\begin{equation}
    H_{\rm In} = \sum_{i=1}^n(\mathbbm{1} - \ketbra{x_i}{x_i})_i \otimes \ketbra{0}{0}
\end{equation} 
checks bit-by-bit whether the given input $\ket{x}$ is correct by projecting each input bit string onto its orthogonal subspace in the \textit{computational} register. Since the correctness of the input has to be ensured on the very first time step, the verifier first needs to project the input state to the starting time via $\ketbra{0}{0}_C$ projection on the clock register. We remind the reader that we maintain the convention of having the first register to be the \textit{computational} register and the second one to be the \textit{clock} register. We remark, in anticipation of the forthcoming discussion, that only $H_{\rm Prop}$ in Kitaev's Hamiltonian construction  (Eq.~\eqref{eq:kitaev-full-hamilt}) is non-stoquastic, in general. This is due to $U_t$ being any arbitrary $2$-qubit unitary. Much of our effort in the following sections shall be spent in making $H_{\rm Prop}$ stoquastic for a special choice of unitaries in a way that removes the need for separate clock registers, thus making them \textit{geometrically} $2$-local.

\subsubsection{Clock propagation}\label{subsection:clock-prop}
In order to achieve a geometrically local construction, the idea presented in 
Ref.\ \cite{aharonov2008adiabatic} is to consider computational and clock information altogether, by ``embedding'' the clock into the particles on the grid (or a line) such that the `shape' (as we shall describe below) of the system corresponds to a unique point in time in the verification circuit. 

The immediate upshot of encoding more information (about the clock, in this case) into a particle is that now we have to work with higher dimensional particles --- qudits. We shall represent these basis states of a qudit by \emph{shapes}, as it will be easier to show how these shapes change as the circuit verification proceeds to help us identify where we are in the verification circuit. To that end, we introduce different states that a particle could be in. Firstly, we represent the computational subspace, on which the unitaries act, using the 
state vectors $\ket{\stateba}, \ket{\statebb}, \ket{\stateca},\ket{\statecb}, \ket{\statebr}, \ket{\statebl}, \ket{\statehr}$, and $\ket{\statehl}$. Here, each pair represents a one-qubit computational basis. For example, one can think of $\ket{\statehr}$ representing $\ket{0}$ and $\ket{\statehl}$ representing $\ket{1}$, and similarly for the other two pairs. An arbitrary superposition of a state vector is represented by $\ket{\statehrl}$ in the subspace spanned by $\{\ket{\statehr}, \ket{\statehl}\}$ within the computational subspace. We need two more state vectors to be able to track time in our state of computation: $\ket{\statea}$ and $\ket{\stated}$, which we term as \textit{unborn} and \textit{dead} states, respectively. We remark that we use similar shapes and names as the ones used in Ref.\ \cite{aharonov2008adiabatic} in order to make the exposition smooth for the readers who are already familiar with prior work on geometric constructions \cite{nagaj-1d, aharonov2008adiabatic}.

The utility of working with the aforementioned shapes is that it enables us to keep track of the clock by mapping the gate application in the verification circuit to the shape of the grid and vice-versa. Since gate application in the original circuit always involves applying gates to some input qubits, let us first see how gate application would look like when we work with shapes. As explained earlier, one can think of a single a pair of shapes with complementary arrow directions as a pair of basis states for a qubit Hilbert space: $\{\ket{\stateba}, \ket{\statebb}\}, \{\ket{\stateca}, \ket{\statecb}\}, \{\ket{\statebr}, \ket{\statebl}\},
\{\ket{\statehr}, \ket{\statehl}\}
$.

In a specific a single layer of computation as described in \Cref{sec:layers}, 
the horizontal and vertical lines are used to denote two qudits that are bundled together (which we will shortly describe).
Moreover, the single-lines denote that no gates in that phase have already acted on the qubit, while the double denote that at least one gate has already acted on the relevant qubit.

While we almost exclusively work with \emph{three}-qubit Toffolis, let us, for pedagogical reasons, start with only two-qubit unitaries. Let $U_{i, i+1}$ be a two-qubit unitary acting on qubits $i, i+1$ (in the original verification circuit). Pictorially, this is represented as
\begin{equation} \label{Eq-unitary-shapes-map-2qubits}
U_{i, i+1} \left( \stackket{\statecboxed}{\statebboxed}_{i, i+1} \right) = \stackket{\statecboxed}{\statecboxed}_{i, i+1},
\end{equation}
where we are assuming that a unitary $U_{i-1, i}$ already acted on qubits $i-1, i$ due to which the state of the top most particle is already double-lined.

This is essentially the same idea as it is used in the first geometric construction of  
Ref.~\cite{aharonov2008adiabatic} and most of the follow-up work since then \cite{nagaj-1d, aharonov2008adiabatic}. Note that one can define a similar rule as in Eq.~\eqref{Eq-unitary-shapes-map-2qubits} for three-qubit unitaries acting on three particles. Doing so, one can realize arbitrary three-qubit Toffolis (universal for classical computation) using \textit{three}-local Hamiltonians by using the construction in Ref.\ \cite{aharonov2008adiabatic}. Furthermore, this construction would also preserve stoquasticity because the `propagation' Hamiltonian as discussed in Eq.~\eqref{Eq-Kitaev-H_prop} shall become stoquastic because $-U$ will always have non-positive off-diagonals for all Toffolis. 
All the remaining terms (in the construction of the local Hamiltonian problem), such as the Hamiltonian checking the correctness of input projects onto the space orthogonal to correct input space, the `output Hamiltonian' projects onto the non-accepting subspace of the verification circuit and the `penalty Hamiltonian' projects onto the space of all the forbidden configurations. Since each of these Hamiltonians is a projector in the computational basis, they are trivially stoquastic. Therefore, the construction 
that has been introduced in Ref.\ \cite{aharonov2008adiabatic} directly gives us a recipe for proving MA-hardness of \textit{three}-local stoquastic Hamiltonian problem on a two-dimensional grid. 

Since most physical Hamiltonians of interest are nearest-neighbour Hamiltonians, it is interesting to ascertain whether the hardness shall also hold for nearest-neighbour (geometrically 2-local) Hamiltonians on a two-dimensional grid. We show that hardness does hold even for nearest-neighbour Hamiltonians by encoding the action of a three-qubit unitary on three particles into only two (nearest-neighbour) particles. We extend the mapping rule in Eq.~\eqref{Eq-unitary-shapes-map-2qubits} to three-qubit unitaries by encoding two logical qubits in a single qudit with higher dimension. To do this, we put together two different basis states from the earlier construction into one (higher-dimensional) particle. For example, the joint state of two qubits, say, $\ket{\stateb}$ (arbitrary superposition over $\stateba$ and $\statebb$) and $\ket{\statebh}$ (arbitrary superposition over $\ket{\statebr}$ and $\ket{\statebl}$)  can be represented by a new higher dimensional particle to be $\dstatebb$. Now, the unitary action on this particle encoding two logical qubits can be tracked by double-lining either one or both of $\stateb,\statebh$ depending on whether, within the same particle, the unitary has acted on one logical qubit (denoted as $\dstatecb$) or both of them (denoted as $\dstatecc$).
Example of such three-qubit unitary action on a high-dimensional qudit is shown in Eq.~\eqref{eq-2D-unitary-action-example}, which allows for the application of arbitrary \textit{three-qubit} unitaries on \textit{nearest-neighbour} particles. 

We demonstrate that the shapes on the grid track computational steps. The process begins with a single qubit unitary, followed by a 2-qubit unitary (Fig.~\ref{fig:lattice_1-3_time-steps}), continuing with successive 3-qubit Toffoli gates (Fig.~\ref{figure:clock-prop}). After the first two steps, Toffoli gates involve two previously engaged qubits and one new, ensuring single arrow transitions within each particle at each step. This ensures that the change is only in adjacent \textit{particles}' states. Upon completing the downward phase, the column becomes $\dstatecc$. The upward phase starts with identity gates resetting the unitary-acted states, transitioning the right column's unborn state to $\dstatebb$ and marking the previous $\dstatecc$ as dead $\dstated$. This allows two-local checks against ``illegal'' states due to the two-site grid change. The Hilbert space of any particle can be decomposed as shown in Eq.~\eqref{eq-hilbert-space-shape-decomposition}. Each shape on the right hand side in Eq.~\eqref{eq-hilbert-space-shape-decomposition} is the shape of the respective Hilbert space states associated to some point in the computation, 
\begin{equation}\label{eq-hilbert-space-shape-decomposition}
\mathcal{H}_{\rm Particle} = \stateda \oplus \dstated \oplus \dstatebb \oplus \dstatecb \oplus \dstatecc, 
\end{equation}
where the first two Hilbert spaces are one-dimensional each and the last three are 4 dimensional each --- since they track two  qubits. Consequently, the total Hilbert space dimension is given by $1+1+4+4+4 = 14$. 

\begin{figure}[ht]
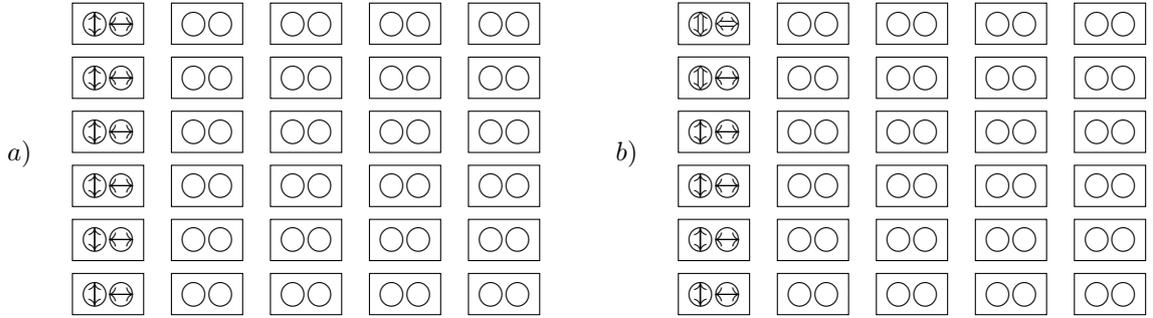

    \center{{
\begin{equation*}
\begin{array}{cccc}
a) &
\begin{array}{cccccc}
\dstatebb & \stateda & \stateda & \stateda & \stateda  \\
\dstatebb & \stateda & \stateda & \stateda & \stateda  \\
\dstatebb & \stateda & \stateda & \stateda & \stateda  \\
\dstatebb & \stateda & \stateda & \stateda & \stateda  \\
\dstatebb & \stateda & \stateda & \stateda & \stateda   \\
\dstatebb & \stateda & \stateda & \stateda & \stateda  
\end{array}
&~~~~b)&
\begin{array}{cccccc}
\dstatecc & \stateda & \stateda & \stateda & \stateda \\
\dstatecb & \stateda & \stateda & \stateda & \stateda \\
\dstatebb & \stateda & \stateda & \stateda & \stateda \\
\dstatebb & \stateda & \stateda & \stateda & \stateda  \\
\dstatebb & \stateda & \stateda & \stateda & \stateda  \\
\dstatebb & \stateda & \stateda & \stateda & \stateda  
\end{array}
\end{array}\nonumber
\end{equation*}
    \caption{Both computational states at the start of the circuit prior to being acted upon by any unitary in (a). Subfigure (b) represents the state of the lattice at the third time step when the first 1-, 2-, and 3-qubit unitaries have acted (as shown in Fig.~\ref{fig:circuit-construction}) --- each changing the state of a single-lined particle to a double-lined particle, one at a time.}
    \label{fig:lattice_1-3_time-steps}
}}
\end{figure}

\begin{figure}[ht]
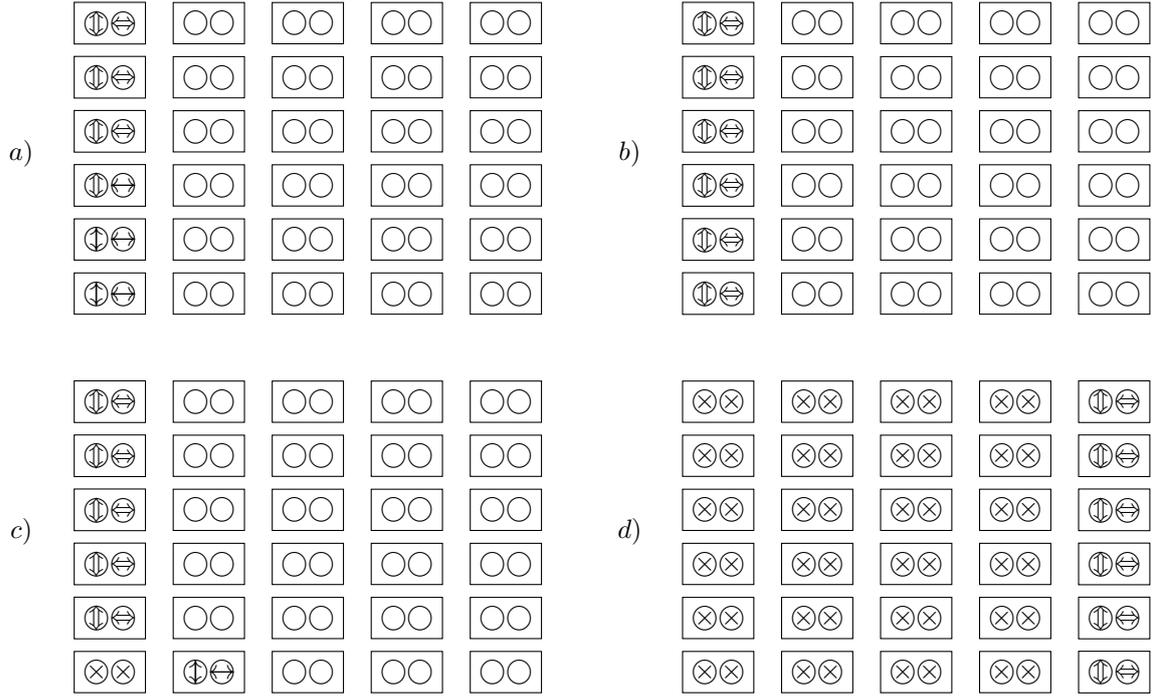

\center{{
\begin{equation*}
\begin{array}{cccc}
a) &
\begin{array}{cccccc}
\dstatecc & \stateda & \stateda & \stateda & \stateda \\
\dstatecc & \stateda & \stateda & \stateda & \stateda \\
\dstatecc & \stateda & \stateda & \stateda & \stateda \\
\dstatecb & \stateda & \stateda & \stateda & \stateda \\
\dstatebb & \stateda & \stateda & \stateda & \stateda \\
\dstatebb & \stateda & \stateda & \stateda & \stateda  
\end{array}&~~~~b) &
\begin{array}{ccccc}
\dstatecc & \stateda & \stateda & \stateda & \stateda \\
\dstatecc & \stateda & \stateda & \stateda & \stateda \\
\dstatecc & \stateda & \stateda & \stateda & \stateda \\
\dstatecc & \stateda & \stateda & \stateda & \stateda \\
\dstatecc & \stateda & \stateda & \stateda & \stateda \\
\dstatecc & \stateda & \stateda & \stateda & \stateda
\end{array}\\\nonumber
&&&\\\nonumber
c) &
\begin{array}{ccccc}
\dstatecc & \stateda & \stateda & \stateda & \stateda \\
\dstatecc & \stateda & \stateda & \stateda & \stateda \\
\dstatecc & \stateda & \stateda & \stateda & \stateda \\
\dstatecc & \stateda & \stateda & \stateda & \stateda \\
\dstatecc & \stateda & \stateda & \stateda & \stateda \\
\dstated & \dstatebb & \stateda & \stateda & \stateda
\end{array}&~~~~d)&
\begin{array}{ccccc}
 \dstated & \dstated & \dstated & \dstated & \dstatecc \\
 \dstated & \dstated & \dstated & \dstated & \dstatecc \\
 \dstated & \dstated & \dstated & \dstated & \dstatecc \\
 \dstated & \dstated & \dstated & \dstated & \dstatecc \\
 \dstated & \dstated & \dstated & \dstated & \dstatecc \\
 \dstated & \dstated & \dstated & \dstated & \dstatecc
\end{array}
\end{array}\nonumber
\end{equation*}
}}
\caption{
State of the grid corresponding to various stages during the MA circuit computation. Subfigure (a) represents a step in the middle of the downward phase, (b) represents the last step in the downward phase, while (c) represents the immediate next step following (b): the first step in the upward phase. Finally, (d) represents the final time step.}
\label{figure:clock-prop}
\end{figure}

\subsection{The penalty Hamiltonian}
At each time step, we have to ensure that the shape of our grid evolves from one `legal' shape to another. For example, single lined particles should always by vertically above the double-lined ones, dead states should always be on the left of the unborn states, etc. More concretely, on a grid with $n'$ rows and $2R + 1$ columns (where $n'$ and $R$ will correspond to the number of particles and the number of layers respectively in the MA verification circuit), a state is legal if there exist a column index $r$ and row index $k$ that satisfy the following rules.
\begin{itemize}
    \item Downward phase: The first (leftmost) $r$ columns are in the dead phase. The top $k$ particles of the $r+1$st column are in the $\dstatecc$ phase and the bottom $n'/2 - k$ particles are in the $\dstatebb$ phase. The last $2R+1-r$ columns are in the unborn phase $\stateda$.  
    \item Upward phase: The first (leftmost) $r$ columns are in the dead phase. The $r+1$st column has its top $n'/2 -k$ particles in the $\dstatecc$ state and the remaining $k$ particles in the $\dstated$ phase. The $r+2$nd column has its top $n'/2 -k$ particles in the $\stateda$ phase and its remaining particles in the $\dstatebb$ phase. All the remaining rightmost columns are in the $\stateda$ phase.
 \end{itemize}

Shapes that do not have this property are said to be illegal. We notice that all illegal shapes can be detected by local patterns that we describe in  Tables \ref{tab-intra-forbidden} and \ref{tab-inter-forbidden}. In order to penalize these illegal shapes, we define {\em penalty} terms from these patterns, which ensure that whenever the grid has any illegal configuration, the energy penalty is at least 1 for at least one of the penalty terms.

Hence, the penalty Hamiltonian is the sum over projectors onto each of states corresponding to the illegal patterns shown in Tables \ref{tab-intra-forbidden} and \ref{tab-inter-forbidden}. Thus,
\begin{equation}\label{eq-hamilt-clock}
    H_{\rm Penalty} = \sum_{i \in {\rm Illegal}} \Pi_i,
\end{equation}
where $\Pi_i$'s are the projectors onto the illegal clock states. We introduce some illegal states in Table~\ref{tab-intra-forbidden} here and defer the full Table of illegal states to Appendix~\ref{sec-illegal-states}, where we list the forbidden configurations between the neighbouring particles. Note that both Tables~\ref{tab-intra-forbidden} and \ref{tab-inter-forbidden} involve configurations that do not involve particles beyond nearest neighbours. Thus, they can always be penalized using a two-local projector onto the said configuration, similar to Table 1 in Ref.~\cite{aharonov2008adiabatic}. 
However, since in our case, we pair the original qubits together to be able to deal with Toffoli gates, we need to take care of higher dimensional particles, there are additional particle configurations. Ref.~\cite{aharonov2008adiabatic} only works with particles that encode one logical qubit in a six-dimensional particle (including the non-computational flags). Since we encode two logical qubits (and other non-computational states) using 14-dimensional particles, we encounter many more configurations that could be illegal and need to penalized. For example, states in our Table~\ref{tab-intra-forbidden} cannot feature, by construction, in Ref.~\cite{aharonov2008adiabatic} since they only ever need to penalize \textit{inter}-particle illegal configurations, we encounter \textit{intra}-particle states as shown in Table~\ref{tab-intra-forbidden}.

Since we apply two qubit (and not single qubit) identities in the upward phase, we get that unborn and dead states always appear in pairs (Rules 1 and 2 in Table~\ref{tab-intra-forbidden}). Otherwise, we would need more dimensions to handle the cases with one subparticle \textit{dead} and one \textit{alive}. For example, $\stateda$. Having such configurations as legal would have increased the particle dimension of our qudit, which we avoid by simply applying 2-qubit identities (instead of 1 qubit identities) in the reset (upwards) phase via the $H_{t}^{\rm Up}$ Hamiltonian in Eq.~\eqref{Eq-H_prop-Up}. Had we applied 1 qubit unitaries, it would have incurred extra particle dimensions we would have to keep track of. We remark that despite this, one can still apply 1-qubit unitaries, as is the case in Ref.~\cite{aharonov2008adiabatic}. The upshot, however, shall be twofold: (i) more number of steps needed to reset in the upward phase, and (ii) increased qudit dimension.

\begin{table}[H]
\center{
\begin{tabular}{|l|l|}
\hline
Forbidden  & Guarantees that \\ \hline \hline
   $\fstateea, \fstateae$ & Unborn state in a single particle must exist in pairs. \\ \hline
   $\fstatede, \fstateed$ & Dead states in a particle must exist in pairs. \\ 
   \hline
   $\fdstatecb$ & Unitary acts first on the left computational ``substate''. \\ 
   \hline
\end{tabular}
}
\caption{Table of illegal states within the same particle that can be penalized using only two-local Hamiltonian terms. Here,  $\ket{\stateelse}$ represents all state vectors except the neighbouring state --- within the same particle. Since these states are within a single particle, they only need a one-local check.}
\label{tab-intra-forbidden}
\end{table}

\subsection{The propagation Hamiltonian}\label{sec:prophamilt-2D}

In the standard local Hamiltonian problem, the Hamiltonian is not stoquastic because the propagation terms (See Eq.~\eqref{Eq-Kitaev-H_prop}) are not, in general, stoquastic. Choosing $U$ to be a Toffoli gate (universal for classical computation), we get a stoquastic Hamiltonian construction, since $U \otimes (- \ket{t}\bra{t+1})$ is always stoquastic, which is the term that appears in the propagation term. We shall now illustrate how to make this stoquastic construction geometrically two-local as well by explicitly constructing the propagation Hamiltonians for the downward (action of unitaries) and upward (reset by identity gates) phases, as shown in Fig.~\ref{fig:circuit-construction}.

For the downward case, steps during the middle of a computation (that is, computational steps involving non-initial and non-final qubits), and the edge cases can be worked very similarly.~\footnote{For an exposition on edge cases for QMA-hardness, see Ref.\ \cite{aharonov2008adiabatic}.} The propagation Hamiltonian is constructed as follows: Let our original MA-verification circuit be composed of $n'$ particles with $R$ layers/rounds. Then the modified circuit, as shown in Fig.\ \ref{fig:circuit-construction}, 
will have $n'R + n'R/2$ gates. And since we encode two qubits in a single particle, the two-dimensional grid will have exactly $n'/2$ sites on the vertical direction, while the horizontal sites (corresponding to the depth of the original circuit) will be $2R+1$, where the very first layer is simply used to check for correct inputs and auxiliary qubits. Hence, we shall work with a $\frac{n'}{2} \times \left(2R+1\right)$ grid, a two-dimensional lattice.

Let us assume that the Hamiltonian acts on column $r$ between rows $k-1$ and $k$. Then the action of the Hamiltonian is, like in the algebraic case, to either evolve the state one step forward in time, one step backward or leave it in the current state. The geometric version at time step $t$ is shown below, where the first summand takes care of forward and backward propagation and the remaining summands correspond to applying identity operation of time steps $t$ and $t-1$,
\begin{align} \label{Eq-H_prop-Down}
H_{t}^{\rm Down} := \left(\begin{array}{cc} 0& -U_t\\ -U_t^{\dagger}&0
\end{array}\right) &+
\left(
\stackketbra{\statecahr}{\statebabhr} +
\cdots + 
\stackketbra{\statecbhl}{\statebbhl}
\right)\ontop{_{k-1,r}}{_{k,r}} \\
\nonumber
&+
\left(
\stackketbra{\statecahr}{\statecabhr} + \cdots +
\stackketbra{\statecbhl}{\statecbhbl}
\right)\ontop{_{k-1,r}}{_{k,r}} \\
\nonumber
&+
\left(
\stackketbra{\statecahr}{\statebabhr} +
\cdots + 
\stackketbra{\statecbhl}{\statebbhl}
\right)\ontop{_{k,r}}{_{k+1,r}} \\
\nonumber
&+
\left(
\stackketbra{\statecahr}{\statecabhr} + \cdots +
\stackketbra{\statecbhl}{\statecbhbl}
\right)\ontop{_{k,r}}{_{k+1,r}}.
\nonumber
\end{align}
There are a total of 16 summands (four choices in each of the two rows, indexed by $k\pm1$) in each of the last four terms in the Hamiltonian above. Note that all summands except the first are projectors, hence do not affect stoquasticity. The $U_\ell$ in the first term is a Toffoli gate such that $-U_\ell$ is always stoquastic. Similarly, one can write down the Hamiltonian for the upwards phase as 
\begin{align} \label{Eq-H_prop-Up}
H_{t}^{\rm Up} &:= \Bigg( \Bigg. \stackketbra{\statecahr}{\dstated} \hskip -5pt \ontop{_{i,r}}{_{i+1,r}} + \stackketbra{\stateda}{\statebabhr} \hskip -5pt \ontop{_{i-1,r+1}}{_{i,r+1}} + \cdots \\ 
\nonumber
&+ \stackketbra{\statecbhl}{\dstated} \hskip -5pt \ontop{_{i,r}}{_{i+1,r}} + \stackketbra{\stateda}{\statebbhl}\hskip -5pt \ontop{_{i-1,r+1}}{_{i,r+1}}\Bigg. \Bigg)   \\
\nonumber
&-
\Bigg( \Bigg. 
\ketbra{\dstated \statebabhr}{\statecahr \stateda} + \ketbra{\statecahr \stateda}{\dstated \statebabhr} + \cdots \\
\nonumber
&+ \ketbra{\dstated \statebbhl}{\statecbhl \stateda} + \ketbra{\statecbhl \stateda}{\dstated \statebbhl}
\Bigg. \Bigg)_{\rm(i,r)(i,r+1)} .
\nonumber
\end{align}
Here, the terms in the first parenthesis correspond to simply applying identity on time steps $t-1$ and $t$. So one simply enlists projectors onto all possible combinations of shapes in the two consecutive time steps. In the upward phase, unlike in the downward phase, the consecutive time steps involve neighbouring columns as well as neighbouring rows. The summands in the second parenthesis involve applying $-U_t = -\mathbbm{1}$. Again, this comes about by mapping all the basis shapes that could occur at time step $t$ to $t-1$ and vice-versa. 

\subsection{Proof of Theorem~\ref{theorem:2d-ma-hard}}
We are now ready to spell out the proof of Theorem~\ref{theorem:2d-ma-hard}, which we shall split into the following two sections: Completeness (Section \ref{subsection:completeness-proof}) and soundness (Section \ref{subsection:soundness-proof}). First, let us define the full Hamiltonian 
\begin{equation} \label{Eq-Full-Hamilt-2}
    H = H_{\rm Init} + H_{\rm Prop}/2 + H_{\rm Final} + H_{\rm Penalty},
\end{equation}
where $H_{\rm Prop}$ is the propagation Hamiltonian denoted as $H_{t}^{\rm Down}$ and $H_{t}^{\rm Up}$ in Eqs.~\eqref{Eq-H_prop-Down} and \eqref{Eq-H_prop-Up} respectively. And $H_{\rm Penalty}$ is that defined in Eq.~\eqref{eq-hamilt-clock}. We now define the remaining two Hamiltonian terms: $H_{\rm Init}$ and $H_{\rm Final}$. Firstly, assume that the $n$-particle input vector $\ket{x}$, $w$-particle witness $\ket{\psi}$, $r$-particle auxiliary systems in an equal superposition of $\ketbra{\statebabhr}{\statebabhr}$ and $\ketbra{\statebbhl}{\statebbhl}$ (to simulate classical randomness needed for the class MA$_{\text{q}}$), and an additional $s$ auxiliary systems are arranged top-to-bottom in the first column of our two-dimensional grid on the first $n, w, r$ and $s$ (high-dimensional) particles, respectively. We can always do this because the input and the witness are given to us as tensor products on respective Hilbert spaces. Moreover, we can always append auxiliary Hilbert spaces of convenient dimensions. One can then write $ H_{\rm Init}$ as \begin{equation}
        H_{\rm Init} := \sum_i \left(\mathbbm{1} - \ketbra{\statebabhr}{\statebabhr}\right)_{i,1} + 
        \sum_{i'}\left(\mathbbm{1} - \ketbra{\text{Coin}}{\text{Coin}}\right)_{i',1}
        + \sum_{j}\left(\ketbra{\neg x_j}{\neg x_j}\right)_{j,1},
    \end{equation}
    where the `Coin' qubits
     \begin{equation}
    \ket{\text{Coin}} := \frac{1}{\sqrt{2}}\ketbra{\statebabhr}{\statebabhr} + \frac{1}{\sqrt{2}}\ketbra{\statebbhl}{\statebbhl}
    \end{equation}
    introduce classical randomness for MA-verification. The indices $i$ denote the sum over the auxiliaries, $i'$ is the sum over the `Coin' qudits, and $j$ is the sum over the inputs. In the geometric construction, we work with the modified circuit as shown in Fig.~\ref{fig:circuit-construction}. We expect that our $s$ auxiliary particles in $\ket{\phi_0}$ are all initialized as $\ketbra{\statebabhr}{\statebabhr}$. Finally, $H_{\rm Final}$ is defined as  
    \begin{equation}
        H_{\rm Final}\coloneqq  \left(\mathbbm{1} - \ketbra{\statecbhl}{\statecbhl}\right)_{i = 1, j = 2R+1},
    \end{equation}
 $i = 1$ and $j= 2R+1$ corresponds precisely to the top right column since that would correspond to $l= 2R +1$, which is indeed the very last identity gate in the original construction. The additional $+1$ appears because in the first column on our grid, we shall only apply identities to verify of the input and auxiliary qubits are in the desired state. As for stoquasticity, we already designed a stoquastic construction for $H_{\rm Prop}$ in Section~\ref{sec:prophamilt-2D}. And since $H_{\rm Final}, H_{\rm Init}$, and $H_{\rm Penalty}$ are projectors, they are already stoquastic. Thus the full Hamiltonian (Eq.~\eqref{Eq-Full-Hamilt-2}) is stoquastic. We are now ready to prove completeness and soundness conditions for proof of Theorem~\ref{theorem:2d-ma-hard}. For both the completeness and soundness conditions, we work with the class MA$_{\text{q}}$ due to its equivalence with the class MA since that makes it easier to use existing tools in Hamiltonian complexity.

\subsubsection{Completeness}\label{subsection:completeness-proof}

Here we will show that if $x$
is a `yes' instance of MA$_{\text{q}}$, then this case can be reduced to the `yes' instance of the local Hamiltonian problem. 

\begin{lemma}[Completeness]\label{lemma-completeness-unamplified}
    Let $x$ be a `yes' instance of  MA$_{\text{q}}$ that is accepted by a verification circuit $U$ of depth $T = {\rm poly}(n)$ with probability at least $1 - \varepsilon$ for some $n$-qubit witness $\ket{\psi}$. Then the geometrically local Hamiltonian $H = H_{\rm Init} + H_{\rm Prop}/2 + H_{\rm Final} + H_{\rm Penalty}$ constructed in \Cref{Eq-Full-Hamilt-2} admits $\lambda_{\min}(H) \leq O(\varepsilon/T)$. 
\end{lemma}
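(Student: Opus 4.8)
The plan is to exhibit an explicit low-energy state, namely the history state of the verification circuit, and bound its energy against each of the four Hamiltonian terms in \cref{Eq-Full-Hamilt-2}. Let me sketch how I would proceed.

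<br>

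First, I would define the candidate witness for the local Hamiltonian problem. Since $x$ is a yes-instance of MA$_{\text{q}}$, there is an $n$-qubit witness $\ket{\psi}$ accepted with probability at least $1-\varepsilon$. I would take the initial product state $\ket{\phi_0}$ consisting of the correct input $\ket{x}$ (encoded in the first column via the $\ket{\statebabhr}$/$\ket{\statebbhl}$ shapes), the witness $\ket{\psi}$, the auxiliaries initialized in $\ket{\statebabhr}$, and the coin qudits in the $\ket{\text{Coin}}$ state to supply the MA randomness. The claimed low-energy state is then the geometric analogue of Kitaev's history state \cref{eq:kitaev}, namely the uniform superposition $\ket{\Phi}=\frac{1}{\sqrt{T+1}}\sum_{t=0}^{T} \ket{\gamma_t}$, where $\ket{\gamma_t}$ is the grid configuration obtained by applying the first $t$ gates of the circuit and whose ``shape'' is exactly the legal configuration (downward or upward phase) described in \cref{subsection:clock-prop}. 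The key point is that $\ket{\Phi}$ is by construction supported entirely on legal shapes and faithfully encodes the propagation rules, so it will have zero or tiny energy against three of the four terms.

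<br>

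Second, I would check the four terms one at a time. (i) For $H_{\rm Penalty}$: each $\ket{\gamma_t}$ is a legal configuration, so it lies in the kernel of every illegal-pattern projector in \cref{eq-hamilt-clock}; hence $\bra{\Phi}H_{\rm Penalty}\ket{\Phi}=0$. (ii) For $H_{\rm Init}$: only the $t=0$ term $\ket{\gamma_0}$ could be penalized, and by construction $\ket{\gamma_0}$ has the input bits correct, auxiliaries and coin in the right states, so $H_{\rm Init}\ket{\gamma_0}=0$; thus $\bra{\Phi}H_{\rm Init}\ket{\Phi}=0$. (iii) For $H_{\rm Prop}$: this is the standard computation that the history state is a zero-energy eigenstate of the propagation terms — one verifies that the off-diagonal $-U_t$ pieces together with the diagonal projector pieces of \cref{Eq-H_prop-Down} and \cref{Eq-H_prop-Up} annihilate $\ket{\Phi}$ because consecutive snapshots $\ket{\gamma_{t-1}},\ket{\gamma_t}$ are related exactly by $U_t$ (or by identity in the upward/reset phase). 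The only subtlety is the $1/2$ normalization and confirming the upward-phase terms (which span neighbouring columns and rows) still propagate $\ket{\Phi}$ correctly; I would note this matches the construction in \cref{sec:prophamilt-2D}. (iv) For $H_{\rm Final}$: this is the single term that carries the energy. Since $\ket{\Phi}$ has weight $\frac{1}{T+1}$ on the final snapshot $\ket{\gamma_T}$, and the output qubit accepts with probability $\ge 1-\varepsilon$, the overlap with the non-accepting subspace is at most $\varepsilon$, giving $\bra{\Phi}H_{\rm Final}\ket{\Phi}\le \frac{\varepsilon}{T+1}=O(\varepsilon/T)$.

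<br>

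Summing the four contributions yields $\lambda_{\min}(H)\le \bra{\Phi}H\ket{\Phi}=O(\varepsilon/T)$, as claimed. The main obstacle, and the only step requiring genuine care rather than bookkeeping, is item (iii): verifying that the geometric history state is exactly a zero-energy state of the \emph{two-local} propagation Hamiltonian across both the downward and upward phases. In particular I would need to confirm that the reset (upward) phase, which re-initializes the next column using the $H_t^{\rm Up}$ terms of \cref{Eq-H_prop-Up} spanning adjacent columns, genuinely maps snapshot $t-1$ to snapshot $t$ with the identity and introduces no spurious matrix elements that would lift the energy of $\ket{\Phi}$ above zero. Once that propagation consistency is established, the remaining bounds are immediate and the lemma follows.
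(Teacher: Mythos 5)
Your proposal is correct and follows essentially the same route as the paper's proof: exhibit the geometric history state built from the correctly initialized input, witness, auxiliaries, and coin qudits, show it is annihilated by $H_{\rm Init}$, $H_{\rm Penalty}$, and $H_{\rm Prop}$ (the paper handles your item (iii) by appealing to the fact that $H_{\rm Prop}$ still acts as the standard one-dimensional random-walk matrix on the history state, as in the algebraically local case), and charge the entire $O(\varepsilon/T)$ energy to $H_{\rm Final}$ via the $\frac{1}{T+1}$ weight on the final snapshot.
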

\begin{proof}
For any `yes' instance, $x \in L_{\rm Yes}$, of MA$_{\text{q}}$,  we know that $\exists \ket{\psi}$ such that $U(\ket{x}\ket{\psi})$ accepts with probability at least
$1-\varepsilon$ for a quantum verification circuit $U = U_T\cdot U_{T-1}\cdots U_2\cdot U_1$. We claim that our geometric construction of the full Hamiltonian $H$ (Eq.~\eqref{Eq-Full-Hamilt-2}) admits the desired $\lambda_{\min}(H) \leq O(\varepsilon/T)$ bound for the state which is the superposition of the so-called history states of the quantum verification circuit. We denote this history state vector by 
\begin{equation}
\ket{\Phi} \coloneqq \frac{1}{\sqrt{T+1}}\sum_{t=0}^T \ket{\phi_t}, 
\end{equation}
for some input state vector $\ket{\phi_0}$ (a tensor product state of the input, the witness, and the auxiliary systems on the grid), for not necessarily the correct input state. Each $\ket{\phi_t} = \mathsf{Enc}(U_t U_{t-1} \cdots U_1\ket{\phi_0})$, where $\mathsf{Enc}(.)$ is the 2D encoding (using the geometric construction described in \Cref{subsection:clock-prop}). We claim that $\ket{\Phi}$ is the (low) energy state of the full Hamiltonian $H$ with energy $\leq O(\varepsilon/T)$. First note that since $H_{\rm Init}$ projects all the input bits to their orthogonal subspace, we have that
\begin{equation}
    H_{\rm Init}\ket{\Phi} = 0 .
\end{equation}
Next, the penalty (sometimes also called clock) Hamiltonian will penalize the ``illegal'' clock states. This shall ensure that we always are transitioning to the correct next state. To penalize the wrong time evolution, one can simply sum over all the projectors onto the wrong evolution configurations. This would ensure that all the correct configurations will get a zero penalty and any wrong configuration shall suffer a non-zero penalty. Using Tables \ref{tab-intra-forbidden} and \ref{tab-inter-forbidden}, we see that all the illegal configurations differ in at most two neighbouring sites. Hence, we can construct two-local projectors projecting on each of the forbidden configurations
\begin{equation}
    H_{\rm Penalty} = \sum_{i \in {\rm Illegal}} \Pi_i,
\end{equation}
since any legal state (and superpositions thereof) are in the nullspace 
of $H_{\rm Penalty}$,
\begin{equation}
    H_{\rm Penalty} \ket{\Phi} = 0.
\end{equation}
Finally note that $ \ket{\Phi}$ is also the zero eigenstate of $H_{\rm Prop}$, like the algebraically-local case \cite{aharonov2008adiabatic}. 
Now, observe that $\bra{\Phi}H_{\rm Final} \ket{\Phi}$ only has a non-zero contribution on $\ket{\phi_{T+1}}$ from the entire history state vector $\ket{\Phi}$ 
as
\begin{equation}
\bra{\Phi}H_{\rm Final} \ket{\Phi} = \frac{1}{T+1} \bra{\phi_{T+1}}H_{\rm Final}\ket{\phi_{T+1}} \leq \frac{1}{T+1}\norm{H_{\rm Final}\ket{\phi_{t+1}}}_{2}^2  \leq \frac{\varepsilon}{T+1},
\end{equation}
where the rejection probability (which is the norm squared of $H_{\rm Final}\ket{\Phi}$) is at most $\varepsilon$.
Hence, if $x \in L_{\rm Yes}$, $\bra{\Phi}H_{\rm Final} \ket{\Phi} \leq \frac{\varepsilon}{T}$. Finally, one can note that $H_{\rm Init}, H_{\rm Final}, H_{\rm Penalty}$ are stoquastic because they are all projectors. Since we already argued in Section \ref{sec:prophamilt-2D} that $H_{\rm Prop}$ is stoquastic, this ensures that our full Hamiltonian in Eq.~\eqref{Eq-Full-Hamilt-2} is also stoquastic. This concludes the completeness proof.
\end{proof}

\begin{corollary}
Let $U$ be an MA$_{\text{q}}$ verification circuit with depth $T= {\rm poly}(n)$, and let $x$ be a `yes' instance that is accepted by $U$ with probability $1$ for some $n$-qubit witness $\ket{\psi}$. Then the geometrically local Hamiltonian $H$ constructed in \Cref{Eq-Full-Hamilt-2} has $\lambda_{\min}(H) = 0$.
\end{corollary}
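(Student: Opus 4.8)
The plan is to read the corollary off as the perfect-completeness specialization ($\varepsilon = 0$) of Lemma~\ref{lemma-completeness-unamplified}, sharpened by a matching lower bound so that the inequality becomes an equality. First I would apply the completeness lemma directly: since $x$ is accepted with probability $1$, we may take $\varepsilon = 0$, whereupon the bound $\lambda_{\min}(H) \leq O(\varepsilon/T)$ collapses to $\lambda_{\min}(H) \leq 0$. This supplies the upper half of the claim and needs no new work beyond instantiating the earlier lemma at its extreme parameter.

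For the lower bound I would argue that $H \succeq 0$. The three terms $H_{\rm Init}$, $H_{\rm Final}$, and $H_{\rm Penalty}$ are projectors and hence positive semidefinite. The propagation term $H_{\rm Prop}$ (Eqs.~\eqref{Eq-H_prop-Down} and \eqref{Eq-H_prop-Up}) is, up to the overall factor $1/2$ appearing in \eqref{Eq-Full-Hamilt-2}, the Laplacian of the one-dimensional random walk on the legal time-configurations --- exactly the structure invoked in the completeness proof when it observes that $\ket{\Phi}$ is a zero eigenstate --- and is therefore positive semidefinite as well. Since a sum of positive semidefinite operators is positive semidefinite, we get $H \succeq 0$ and thus $\lambda_{\min}(H) \geq 0$. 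Combining the two inequalities yields $\lambda_{\min}(H) = 0$.

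Alternatively, and more transparently, I would exhibit an explicit zero-energy state. Take $\ket{\phi_0}$ to be the correctly initialized input --- the product of the true input $\ket{x}$, the accepting witness $\ket{\psi}$, the coin qudits in the prescribed equal superposition, and the auxiliary particles in their designated state --- and form the history state $\ket{\Phi}$ as in the proof of Lemma~\ref{lemma-completeness-unamplified}. That term-by-term analysis already gives $H_{\rm Init}\ket{\Phi} = 0$, $H_{\rm Penalty}\ket{\Phi} = 0$, and $H_{\rm Prop}\ket{\Phi} = 0$; the only quantity that improves in the perfect-completeness regime is the final term. Because the verifier accepts with probability $1$, the last snapshot $\ket{\phi_{T+1}}$ lies entirely in the accepting subspace, so $H_{\rm Final}\ket{\phi_{T+1}} = 0$ and hence $H_{\rm Final}\ket{\Phi} = 0$. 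All four summands annihilate $\ket{\Phi}$, so $\bra{\Phi}H\ket{\Phi} = 0$ and $\lambda_{\min}(H) \leq 0$, which together with $H \succeq 0$ closes the argument.

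The only point demanding genuine care is the initial term. Unlike the general completeness lemma, where $\ket{\phi_0}$ need not carry the correct input, here I must commit to the true $\ket{x}$ and a specific accepting witness $\ket{\psi}$ so that the projector $\sum_j \ketbra{\neg x_j}{\neg x_j}$ and the coin/auxiliary penalties all vanish on the first column; this is precisely what makes $H_{\rm Init}\ket{\Phi} = 0$ hold exactly rather than approximately. I expect this bookkeeping --- checking that every summand of $H_{\rm Init}$ is satisfied by the chosen $\ket{\phi_0}$ and that the legality rules of Section~\ref{subsection:clock-prop} are respected so $H_{\rm Penalty}$ also annihilates $\ket{\Phi}$ --- to be the main (and fairly modest) obstacle, with the propagation and final terms then following immediately from the structure already established.
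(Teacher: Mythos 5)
Your proposal is correct and takes essentially the same route the paper intends: the paper states this corollary without a separate proof, as the $\varepsilon=0$ instance of Lemma~\ref{lemma-completeness-unamplified} (giving $\lambda_{\min}(H)\leq 0$) combined with the fact that all four summands of $H$ in \cref{Eq-Full-Hamilt-2} are positive semidefinite (giving $\lambda_{\min}(H)\geq 0$). Your alternative explicit construction of a zero-energy history state is simply an unwinding of that lemma's proof in the perfect-completeness case, so no genuinely different machinery is involved.
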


\subsubsection{Soundness} \label{subsection:soundness-proof}
The proof of soundness follows essentially from algebraically local case of QMA-hardness of the general LH-MIN problem. We still state the proof for completeness. Our proof closely resembles existing proofs presented in
Refs.\ \cite{aharonov2002quantum,KitaevClassicalQuantumComputation,Aharonov_2009}.

\begin{lemma}[MA$_{\text{q}}$ soundness]\label{lemma-soundness-qma}
    If $x$ is a `no' instance of MA$_{\text{q}}$, then $\lambda_{\min}(H) = \Omega(1/T^3)$, where $T = {\rm poly}(n)$ and $n$ is the number of qubits of the input $\ket{x}$.
\end{lemma}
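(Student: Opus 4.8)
The plan is to follow Kitaev's standard spectral-sequestering strategy, adapted to the geometric clock, using the two ingredients already available: the penalty Hamiltonian that confines low-energy states to the legal subspace, and the history-state structure from the completeness proof. First I would invoke Kitaev's geometric (projection) lemma \cite{KitaevClassicalQuantumComputation} with $H_{\rm Penalty}$ in the role of the large positive confining term. Since every illegal configuration is detected by one of the two-local projectors in Tables~\ref{tab-intra-forbidden} and \ref{tab-inter-forbidden}, the operator $H_{\rm Penalty}$ has the legal subspace $\mathcal{L}$ as its kernel and smallest nonzero eigenvalue at least $1$. After rescaling $H_{\rm Penalty}$ by a large enough $\mathrm{poly}(T)$ factor (absorbed at the end into the $1/\mathrm{poly}$ promise gap), the lemma gives $\lambda_{\min}(H)\geq \lambda_{\min}\big((H_{\rm Init}+H_{\rm Prop}/2+H_{\rm Final})|_{\mathcal{L}}\big)-o(1/T^3)$, so it suffices to lower-bound the ground energy of the remaining terms restricted to $\mathcal{L}$.

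Next I would exhibit the clock structure of $\mathcal{L}$. Each legal shape of the grid corresponds to a unique time step $t\in\{0,\dots,T\}$ of the modified circuit of Fig.~\ref{fig:circuit-construction}, so $\mathcal{L}\cong\mathcal{H}_{\rm comp}\otimes\mathbb{C}^{T+1}$, and the single-predecessor/single-successor property enforced by the penalties means the legal transition graph is a simple path. Conjugating by the computation unitary $W=\sum_t U_t\cdots U_1\otimes\ketbra{t}{t}$ (with each $U_t$ a Toffoli or identity) maps $H_{\rm Prop}|_{\mathcal{L}}$ to $\mathbbm{1}_{\rm comp}\otimes E$, where $E$ is the Laplacian of the path on $T+1$ vertices --- exactly the one-dimensional random-walk matrix identified in the completeness argument. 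Its kernel is spanned by the uniform clock vector and its spectral gap is $\Omega(1/T^2)$, so $H_{\rm Prop}|_{\mathcal{L}}$ has ground space precisely the history states $\ket{\Phi}=W\big(\ket{\phi_0}\otimes\tfrac{1}{\sqrt{T+1}}\sum_t\ket{t}\big)$, with gap $\Omega(1/T^2)$ above them.

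Then I would analyze $H_{\rm Init}+H_{\rm Final}$ on this history ground space. For an arbitrary history state the $t=0$ slice carries weight $1/(T+1)$ and is penalized by $H_{\rm Init}$ unless $\ket{\phi_0}$ is a valid initial configuration (correct input bits, ancillas in their designated initial shape, coin qudits in $\ket{\mathrm{Coin}}$), while the $t=T$ slice carries weight $1/(T+1)$ and is penalized by $H_{\rm Final}$ according to the rejection probability. Using the MA$_{\text{q}}$ soundness promise, for every witness $\ket{\psi}$ --- hence for every valid $\ket{\phi_0}$ --- the circuit rejects with probability at least $2/3$, so the pulled-back output projector dominates $\tfrac{2}{3}$ times the identity on the valid-initial subspace. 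Minimizing $\bra{\Phi}(H_{\rm Init}+H_{\rm Final})\ket{\Phi}$ over all $\ket{\phi_0}$ therefore yields $\Omega(1/T)$ in the no-instance case.

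Finally I would combine the $\Omega(1/T^2)$ propagation gap with this $\Omega(1/T)$ penalty on its kernel using Kitaev's quadratic-form lemma once more, obtaining $\lambda_{\min}\big((H_{\rm Prop}/2+H_{\rm Init}+H_{\rm Final})|_{\mathcal{L}}\big)=\Omega(1/T^3)$, which together with the first step gives $\lambda_{\min}(H)=\Omega(1/T^3)$. I expect the main obstacle to be the second step: rigorously verifying that the legal transition graph is a \emph{single} simple path with no branching or spurious self-loops across the two phases, so that $H_{\rm Prop}|_{\mathcal{L}}$ is genuinely a path Laplacian and $W$ is well-defined. This is exactly what the two-phase (downward/upward) bookkeeping, the intra- and inter-particle penalties of Tables~\ref{tab-intra-forbidden} and \ref{tab-inter-forbidden}, and the choice to reset with two-qubit rather than one-qubit identities are designed to guarantee; checking that no legal configuration admits two distinct legal successors is the delicate part. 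Handling the coin/randomness register of MA$_{\text{q}}$ in the third step, and keeping the rescaling of $H_{\rm Penalty}$ consistent with the $0\preccurlyeq H_j\preccurlyeq\mathbbm{1}$ normalization, are routine but must be tracked.
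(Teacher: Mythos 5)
Your overall skeleton is the same as the paper's: apply Kitaev's geometric lemma (Lemma~\ref{lemma-geometric-kitaev}) with the $\Omega(1/T^2)$ gap of $H_{\rm Prop}$ and an $\Omega(1/T)$ contribution coming from the soundness promise, and multiply to get $\Omega(1/T^3)$. However, your first step --- confining to the legal subspace by rescaling $H_{\rm Penalty}$ by a $\poly(T)$ factor and invoking the projection lemma --- does not prove the lemma as stated. The statement concerns the fixed Hamiltonian $H$ of Eq.~\eqref{Eq-Full-Hamilt-2}, in which $H_{\rm Penalty}$ enters with unit weight. Writing $H' = H + (c-1)H_{\rm Penalty}$ for your rescaled operator with $c=\poly(T)$, one has $H' \succcurlyeq H$, hence $\lambda_{\min}(H') \geq \lambda_{\min}(H)$: a lower bound on the rescaled Hamiltonian transfers in the \emph{wrong} direction and says nothing about $\lambda_{\min}(H)$. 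Nor can the rescaling be dropped: with unit weight the confining gap is $J=1$ while $\norm{H_{\rm Init} + H_{\rm Prop}/2 + H_{\rm Final}}$ is $\Theta(n)$ (a sum of $\Theta(n)$ commuting projectors plus a bounded term), so the projection-lemma error term $\norm{H_1}^2/(J-2\norm{H_1})$ is vacuous. Redefining the reduction to output the weighted Hamiltonian would still yield MA-hardness, but it is a different Hamiltonian from the one the lemma (and the completeness analysis and normalization $0 \preccurlyeq H_j \preccurlyeq \mathbbm{1}$) refers to.

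The fix is the paper's observation, which needs no rescaling and no error term: every term of $H$ preserves the splitting of the Hilbert space into the legal subspace $\mathcal{L}$ and the span of illegal configurations (propagation moves preserve global legality, and $H_{\rm Init}$, $H_{\rm Final}$, $H_{\rm Penalty}$ are diagonal in the configuration basis); on the illegal block $H_{\rm Penalty} \succcurlyeq \mathbbm{1}$ while all other terms are PSD, so $\lambda_{\min}(H) = \min\{\lambda_{\min}(H|_{\mathcal{L}}),\, 1\}$, and the geometric lemma is applied once, inside $\mathcal{L}$, with $A_1 = H_{\rm Init}+H_{\rm Final}$ and $A_2 = H_{\rm Prop}$. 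A second, quantitative issue: you plan to feed the bound $\bra{\Phi}A_1\ket{\Phi} = \Omega(1/T)$ on history states into the lemma, but the lemma takes the \emph{angle} between null spaces as input; converting energy into angle via $A_1 \preccurlyeq \norm{A_1}(\mathbbm{1}-\Pi_{N_1})$, where $\Pi_{N_1}$ projects onto $\ker A_1$, costs a factor $\norm{A_1} = \Theta(n)$ and leaves you with $\Omega(1/(nT^3))$ rather than the claimed $\Omega(1/T^3)$ (and the alternative of combining via the subtractive projection lemma fails outright, since $\norm{A_1}^2 \gg 1/T^2$). The paper avoids this by bounding the overlap $\norm{\Pi_{\rm Init}\Pi_{\rm Final}\ket{\Phi}}^2 \leq 1-\Omega(1/T)$ directly, decomposing $\ket{\phi_0}$ into valid and invalid components and controlling the cross terms --- exactly the computation hidden in your phrase ``minimizing over all $\ket{\phi_0}$''. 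Your remaining steps (the path-Laplacian form of $H_{\rm Prop}|_{\mathcal{L}}$ after conjugation by $W$, kernel equal to the history states, uniqueness of legal successors and predecessors) do match the paper's argument.
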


 \begin{proof}
 Let us decompose $H = A_1 + A_2$, where $A_1 = H_{\rm Init} + H_{\rm Final}$ and $A_2 = H_{\rm Prop}$. Since $H_{\rm Init}$ and $H_{\rm Final}$ act on different time leaves (geometrically, on different parts of the two-dimensional grid), they commute and hence share an eigenbasis. Since all the summands in $H_{\rm Init}$ are projectors, the minimum eigenvalue of $A_1$ is 1. The difficulty in lower bounding $\lambda_{\min}(H)$ is that $A_1$ and $A_2$ do not commute. To that end, we shall use a geometric lemma \cite{KitaevClassicalQuantumComputation,aharonov2002quantum} 
 that lower bounds the minimum eigenvalue of the sum of two Hermitian PSD matrices.

 \begin{lemma}[Lemma 1 in Ref.\  \cite{aharonov2002quantum}]\label{lemma-geometric-kitaev}
     Let $H_1$ and $H_2$ be two Hermitian positive semi-definite matrices, and let $N_1$ and $N_2$ be the eigenspaces of the eigenvalue 0, respectively. If the angle between $N_1$ and $N_2$ is some $\theta > 0$, and the second eigenvalue of both $H_1$ and $H_2$ is $\geq \lambda$, then the minimal eigenvalue of $H_1 + H_2 \geq \lambda \sin^2(\theta/2)$.
 \end{lemma}
 
We know from the results of Ref.\ \cite{KitaevClassicalQuantumComputation} that the second smallest (or the smallest non-zero) eigenvalue of $H_{\rm Prop}$ is $\Omega(1/T^2)$, where $T$ corresponds to the total steps in the verification circuit of MA$_{\text{q}}$. On the other hand, as argued above, the smallest non-zero eigenvalue of $H_{\rm Init} + H_{\rm Final}$ is 1. Thus $\lambda$ in Lemma~\ref{lemma-geometric-kitaev} is $\Omega(1/T^2)$. What remains is to find a lower bound on $\sin^2(\theta/2)$. Let $\Pi_{\rm Init}$ and $\Pi_{\rm Final}$ be the projectors onto the nullspaces of $H_{\rm Init}$ and $H_{\rm Final}$ respectively. Then, one can write the total projection onto the nullspace of $A_1$ as $\Pi_{\rm Init}\Pi_{\rm Final}$ since each of the projectors acts on different subspaces. As argued in Ref.\ \cite{aharonov2002quantum}, $\cos^2(\theta)$ is the maximum $2$-norm squared of the projection of the history state, $\Phi$, onto the nullspace of $A_1$. We shall see that it suffices to upper bound $\norm{\Pi_{\rm Init}\Pi_{\rm Final}\ket{\Phi}}^2$ and using $\sin^2(\theta) + \cos^2(\theta) = 1$ to get the desired lower bound on $\sin^2(\theta)$. Now, let us decompose the state $\ket{\phi_0}$ as superposition over valid and an invalid input
\begin{equation}
    \ket{\phi_0} = \alpha \ket{\phi_0^v} + \beta \ket{\phi_0^i},
\end{equation}
where $\abs{\alpha}^2$ is the probability that the prover gives a valid input state with correctly initialized auxiliaries in state $\ket{0}^{\otimes s}$. We label all such valid state vectors by $\ket{\phi_0^v}$. Any state where any one of the input of auxiliary systems are incorrectly initialized are henceforth represented by $\ket{\phi_0^i}$ in the orthogonal subspace. By linearity of unitary transformations, it holds $\ket{\Phi} = \alpha \ket{\Phi^v} + \beta \ket{\Phi^i}$, where $\ket{\Phi^v}$ (resp. $\ket{\Phi^i}$) are the valid (resp. invalid) history states corresponding to the input state vectors $\ket{\phi_0^v}$ and $\ket{\phi_0^i}$,
\begin{eqnarray}
    \norm{\Pi_{\rm Init}\Pi_{\rm Final}\ket{\Phi}}^2 &=& \bra{\Phi} \Pi_{\rm Init}\Pi_{\rm Final} \ket{\Phi}\\ 
    \nonumber
    &=&
    \abs{\alpha}^2 \norm{\Pi_{\rm Init}\Pi_{\rm Final}\ket{\Phi^v}}^2 + \abs{\beta}^2\ \norm{\Pi_{\rm Init}\Pi_{\rm Final}\ket{\Phi^i}}^2 + z + z^{*},
\end{eqnarray}
where $z := \alpha^*\beta\bra{\Phi^v}\Pi_{\rm Init}\Pi_{\rm Final}\ket{\Phi^i}$.

We shall upper bound each of the summands separately. For the first summand, note that $\Pi_{\rm Init}$ shall not contribute anything for a valid input state, the intermediate timeleaves shall contribute a factor of $T$ and $\Pi_{\rm Final}$ shall contribute a factor only on the top right corner of the grid (in the final column)
\begin{equation}
     \norm{\Pi_{\rm Init}\Pi_{\rm Final}\ket{\Phi^v}}^2 \leq \frac{1}{T+1} (T + \bra{\phi_{T+1}^v}\Pi_{\rm Final}\ket{\phi_{T+1}^v}) \leq \frac{T + \varepsilon}{T+1},
\end{equation}
where $\varepsilon$ is the small acceptance probability whenever $x$ is not in the language. For the second summand, note that 
\begin{equation}
   \norm{\Pi_{\rm Init}\Pi_{\rm Final}\ket{\Phi^i}}^2 =  \norm{\Pi_{\rm Final}\Pi_{\rm Init}\ket{\Phi^i}}^2 \leq \norm{\Pi_{\rm Init}\ket{\Phi^i}}^2,
\end{equation}
where the last inequality follows because (by Cauchy-Schwarz inequality) $\norm{P \ket{x}} \leq \norm{\ket{x}}$ for a projection $P$. By definition, any invalid state vector
\begin{align}
    \Pi_{\rm Init}\ket{\Phi^i} &= \frac{1}{\sqrt{T+1}}\Pi_{\rm Init}\ket{\phi_0^i} + \frac{1}{\sqrt{T+1}} \sum_{t=1}^{T}\Pi_{\rm Init}\ket{\phi_t^i} \\
    \nonumber
    &= \frac{1}{\sqrt{T+1}}\sum_{t=1}^{T} (\mathbbm{1}-H_{\rm Init})\ket{\phi_t^i} \\
    \nonumber
    &= \frac{1}{\sqrt{T+1}}\sum_{t=1}^{T}\ket{\phi_t^i},
    \nonumber
\end{align}
where the second equality follows since $\Pi_{\rm Init}$ is the projector onto the valid initial states and has a zero eigenvalue on the invalid ones. The last equality is true since $H_{\rm Init}$ only acts on the initial time step. Hence,
\begin{equation}
      \norm{\Pi_{\rm Init}\ket{\Phi^i}}^2 \leq T/(T+1) .
\end{equation}
Finally, note that $z + z^* = 2Re(z)$. To that end, note that $Re(z) \leq |z|$ since $Re(z)$ is the projection of $z$ on the real axis. Again, by noting that the nullspace of $\Pi_{\rm Init}$ are all the valid input states and since $\Pi_{\rm Init}$ is a projector. So it must be that $\Pi_{\rm Init}\ket{\Phi^v} = \ket{\Phi^v}$
\begin{equation}
    Re(\alpha^*\beta\bra{\Phi^v}\Pi_{\rm Init}\Pi_{\rm Final}\ket{\Phi^i}) \leq \abs{\alpha}\abs{\beta} |\bra{\Phi^v}\Pi_{\rm Final}\ket{\Phi^i}| \leq |\bra{\Phi^v}\Pi_{\rm Final}\ket{\Phi^i})|,
\end{equation}
where we note that $\Pi_{\rm Init}^{\dagger} = \Pi_{\rm Init}$ and that $\braket{\phi_t^v}{\phi_t^i}$ = 0 $\forall t$ since they are achieved by a unitary transform of $\braket{\phi_0^v}{\phi_0^i}$ = 0 because they are orthogonal in our construction. Now again note that $\Pi_{\rm Final}$ only acts on $\ket{\phi_t}$. Hence,
\begin{eqnarray}
    |\bra{\Phi^v}\Pi_{\rm Final}\ket{\Phi^i}| &=& \frac{1}{T+1} |\bra{\Phi_{T+1}^v}\Pi_{\rm Final}\ket{\Phi_{T+1}^i}| 
    \\
    \nonumber
    &\leq& \frac{1}{T+1}\norm{\Pi_{\rm Final}\ket{\Phi_{T+1}^v}}\\
    \nonumber
    &\leq& \frac{\sqrt{\varepsilon}}{T+1},
\end{eqnarray}
where the second to last inequality follows by Cauchy-Schwarz and the normalization of $\ket{\Phi_{T+1}^i}$ and the last inequality follows by the assumption that for the correct input which is not in the language, the acceptance probability is low: $\varepsilon$ by our assumption.

Finally, putting all the terms together, we get 
\begin{align}
     \norm{\Pi_{\rm Init}\Pi_{\rm Final}\ket{\Phi}}^2 &\leq \abs{\alpha}^2  \cdot \frac{T + \varepsilon}{T+1} + \abs{\beta}^2 \cdot \frac{T}{T+1} + \frac{2\sqrt{\varepsilon}}{T+1} \\
     \nonumber
     &= \frac{T (\abs{\alpha}^2 + \abs{\beta}^2) }{T+1} +  \varepsilon\left(\frac{\abs{\alpha}^2}{T+1} + \frac{2}{\sqrt{\varepsilon}(T+1)}\right)\\
     \nonumber
     &\leq \frac{T + \varepsilon + 2\sqrt{\varepsilon}}{T+1}\\
     &= 1 - \frac{c}{T+1},
\end{align}
where $c = 1- (\varepsilon + 2\sqrt{\varepsilon})$. Now, due to the identity $\cos^2(\theta/2) = \frac{1+\cos(\theta)}{2}$, we have 
\begin{equation}
    \cos^2(\theta/2) \leq \frac{1 + \sqrt{1-\frac{c}{T+1}}}{2} \leq 1 - \frac{c}{4(T+1)},
\end{equation}
where in the last inequality, we used $\sqrt{1-x} \leq 1-\frac{x}{2}$ for $x \in [0, 1]$. Finally, $\sin^2(\theta/2) + \cos^2(\theta/2) = 1$ implies that $\sin^2(\theta/2) \geq \frac{c}{4(T+1)} = \Omega(1/T)$. Using the minimum non-zero eigenvalue bound on $H_{\rm Prop}$ for $\lambda$ in Lemma~\ref{lemma-geometric-kitaev}, we get that $\lambda_{\min}(A_1 + A_2) = \Omega(1/T^3)$.  Finally, we note that $1/T^3 - \varepsilon  = O(1/{\rm poly}(n))$ because $T = O({\rm poly}(n))$ and $\varepsilon$ is exponentially close to $0$. This completes the soundness proof.
\end{proof}

Note that we did not need to separately bound $H_{\rm Penalty}$ here since it has a minimum eigenvalue of 1. This is because the only other type of states are illegal states and all of them are locally checkable. Moreover, locally checkable states form the basis of $H_{\rm Penalty}$. Hence if we are in the subspace of illegal states, it must be the superposition over basis states of $H_{\rm Penalty}$, each with eigenvalue 1.

\section{MA-hardness of 2-local 19-state stoquastic 
Hamiltonians on a line}

In this section, we prove the MA-hardness of stoquastic Hamiltonian problems on a line that builds upon many concepts introduced in the two-dimensional  construction in \cref{sec:2d-ma-hardness}, which we expect the reader to be familiar with.
In the one-dimensional case, we are faced with the same challenge as we faced in the two-dimensional case: to implement three-local Toffolis in a two-local way. Luckily, our technique of trading the Hilbert space dimension of the particles with the locality of the Hamiltonian works here as well.

Our clock construction is largely inspired by 
the ideas of Ref.\ \cite{Aharonov_2009} which 
has proven the QMA-completeness for the local Hamiltonian problem on a one-dimensional line with 12 state particles, except that our construction generalizes to single particles encoding two qubits and reduces the number of steps to reset a block by 
a constant factor because we do not use the turn flag on both the left and the right end, unlike the results of Ref.\ \cite{Aharonov_2009}. The high-level idea for a one-dimensional clock construction is to essentially stretch the two-dimensional construction as a one straight line, composed of `blocks', which is a collection of neighbouring particles. Each block captures a column of the 2D construction, or in other words, one layer of the the verification circuit.

First, let us imagine a downward phase in a given layer of our circuit, akin to the 2D construction. We work with a so-called ``active particle'' (see \cref{tab:active&inactive-states}) at all times. This particle uniquely determines where we are in the circuit. When we implement the a Toffoli gate $U$ in the 1D line, we will have the following mapping
\begin{equation}
    U\left(\ket{\dstatecc} \ket{\statebhboxed}\right)_{i, i+1} = \left(\ket{\statehrlboxed}\ket{\dstatecc}\right)_{i, i+1}
\end{equation}
on two sites (3 qubits) since $\dstatecc$ encodes two qubits and $\statebhboxed$ and $\statehrlboxed$ encode a single qubit each. On a one-dimensional line, we keep applying $U$ that propagates the state $\dstatecc$ from the left-most position in the block to the right-most one such that the block is in a state
\begin{equation}\label{eq-1D-block-end}
   \bigg\vert \statehrlboxed \statehrlboxed \statehrlboxed \dstatecc \bigg\vert,
\end{equation}
where the two lines on the boundary represent a block boundary, which is a position on the 1D line that is known a priori. This concludes what was the `downward' phase in our two-dimensional  construction (or the full layer of unitary gate application in the original MA-verification circuit). Now in order to go to the next layer, correspondingly to the next block on our one-dimensional 
grid, we need to reinitialize the next block as
\begin{equation}\label{eq-1D-block-start}
 \bigg\vert \dstatecc \statebhboxed \statebhboxed \statebhboxed \bigg\vert.    
\end{equation}

One can immediately notice that directly going from Eq.~\eqref{eq-1D-block-end} to Eq.~\eqref{eq-1D-block-start} involves interactions between left and right flags separated by $\dstatecc$. In the worst case, the first $\statehrlboxed$ in Eq.~\eqref{eq-1D-block-end} interacts with the first $\statebhboxed$ in Eq.~\eqref{eq-1D-block-start}. This is highly non-local. Thus, once the `downward' phase is finished and we reach the end of a block, we will `move' the state in the current block to the next one qudit by qudit: we first move the first particle from the current block all the way through the other particles, and place it in the first position of the next block, reinitializing it. Then we repeat this with each one of the particles as shown in steps 4 - 22 in \cref{fig:1D-propagation}. By the end of this process, the whole state is reinitialized in Eq.~\eqref{eq-1D-block-start} and we can proceed with the next layer of gates.

\subsection{One-dimensional transition rules and propagation terms}\label{subsec:transition-rules}
We first define the rules that govern the transition of states from one to another in our construction. Then  we demonstrate a  full cycle on 4 qubits  starting from a gate on one block up until reinitializing in the second block. We define a block as a sub-chain of length $n'-1$, where $n'$ is the number of qubits in the MA$_{\text{q}}$ verification circuit as described in \cref{sec:layers}. There are a total of $R$ such blocks (where $R$ is the number of layers in the original MA verification circuit) constituting the entire one-dimensional chain of size $(n'-1)R$.

\begin{table}[ht]
    \centering
   \begin{tabular}{|l|l|}
  \hline
 Inactive states  & Active states \\ \hline
  \dstatebb: Stores 2-qubit data when $\dstatecc$ is inactive.
  & \dstatecc: Gate flag (tracks gate application). \\
   \statebhboxed: Qubits on the right of the active site.
   & \rflag: Right-moving flag.\\
  \statehrlboxed: Qubits on the left of the active site. &  \lflag: Left-moving flag.\\
  \stateaboxed: Unborn (to right of all qubits).
  & \turnflag: Turning flag (changes sweep direction). \\
  \statedboxed: Dead (to left of all qubits).  & \\
  \hline
\end{tabular}
    \caption{Classification of states into `active' and `inactive' states. Both forward and backward transition of states is governed primarily by the active state. The two-local Hamiltonian terms applied at each time step are either on the left or on the right (neighbouring inactive state) of the active state. Neighbouring inactive states together with the current active state determine the transition to the next configuration.}
    \label{tab:active&inactive-states}
\end{table}

\begin{remark}[Notation]
    Note that $\rflag$, while pointing in only one horizontal direction, still carries two-dimensional data. The right-pointing shape is chosen for clarity.
\end{remark}

In the case of left and right moving flags, it is the next particle in the same direction. In case of $\dstatecc$ it is always right. As for the turning flag, note that we do not need the turn flag on the right turn, as opposed to
Ref.\ \cite{Aharonov_2009}. The reason we do not need a $\turnflag$ flag on the right turn is because the incoming right flag is already two dimensional interacting with the particle on its right --- $\stateaboxed$ --- which is one-dimensional to convert into a $\lflag$ and $\statehrlboxed$, which is the same total dimension of three. However, in the left sweep the one-dimensional flag, $\lflag$ interacting with $\statedboxed$ (another one-dimensional flag) cannot give $\statedboxed$ and $\rflag$ (two-dimensional), 
because that would map two-dimensional input to three-dimensional output. Each active site in a given configuration has a unique forward and backward transition rule as can be noted in the propagation rules. All particles except $\turnflag$, $\stateaboxed$, and $\statedboxed$ are data-carrying particles --- they store qubit data. Data carrying particles denoted by two shapes in a box are four-dimensional, and those with only one shape in a box are two dimensional.

We now turn to discussing the one-dimensional propagation rules.

\begin{enumerate}

\item (Gate application) \dstatecc \statebhboxed $\rightarrow$ \statehrlboxed \dstatecc
.\item (Right turn) \dstatecc $\big\vert$ \stateaboxed $\rightarrow$ \lflag $\big\vert$ \dstatebb
.\item (Left sweep)   \statehrlboxed \lflag $\rightarrow$ \lflag \statebhboxed; \dstatebb $\big\vert$ \lflag $\rightarrow$ \lflag $\big\vert$ \dstatebb
.\item (Left turn)   \statedboxed $\big\Vert$  \lflag $\rightarrow$ \statedboxed $\big\Vert$ \turnflag; \turnflag \statebhboxed $\rightarrow$ \statedboxed \rflag
.\item (Right sweep)   \rflag  \statebhboxed $\rightarrow$ \statehrlboxed \rflag; \rflag $\big\vert$ \dstatebb $\rightarrow$ \dstatebb $\big\vert$  \rflag
.\item (Right turn)   \rflag  \stateaboxed $\rightarrow$ \lflag  \statebhboxed
.\item (New round)   \turnflag $\big\vert$  \dstatebb $\rightarrow$ \statedboxed $\big\vert$  \dstatecc .
\end{enumerate}

\begin{remark}[Notation]
    $\big\Vert$ denotes that the rule holds for both with and without a boundary. For example the
first rule in Rule 4 (left turn) holds for with and without boundary.
\end{remark}

Using these, we can define $H_{\rm Prop}$ analogously to the two-dimensional case as in Section \ref{sec:prophamilt-2D}. Note that these terms are still stoquastic because the terms with $-U_l$ are stoquastic due to $U_l$ being Toffolis. And all the remaining terms are projections or have non-positive off-diagonals.

\begin{figure}[ht!]
\begin{tabular}{ccc|c|c|c}
Rule 1: &$1.$&  \statedboxed &  \dstatecc \statebhboxed \statebhboxed &   \stateaboxed \stateaboxed \stateaboxed  & \stateaboxed .\\ 
Rule 1: &$2.$& \statedboxed & \statehrlboxed \dstatecc \statebhboxed & \stateaboxed \stateaboxed \stateaboxed  & \stateaboxed .\\ 
Rule 2: &$3.$&  \statedboxed &  \statehrlboxed \statehrlboxed \dstatecc &   \stateaboxed \stateaboxed \stateaboxed  & \stateaboxed .\\ 
Rule 3: &$4.$&  \statedboxed &  \statehrlboxed \statehrlboxed \lflag &   \dstatebb \stateaboxed \stateaboxed  & \stateaboxed .\\
Rule 3: &$5.$&  \statedboxed &  \statehrlboxed \lflag  \statebhboxed &   \dstatebb \stateaboxed \stateaboxed  & \stateaboxed .\\
Rule 4: &$6.$&  \statedboxed &  \lflag\statebhboxed  \statebhboxed &   \dstatebb \stateaboxed \stateaboxed  & \stateaboxed .\\
Rule 4: &$7.$&  \statedboxed &  \turnflag\statebhboxed  \statebhboxed &   \dstatebb \stateaboxed \stateaboxed  & \stateaboxed .\\
Rule 5: &$8.$&  \statedboxed &  \statedboxed \rflag  \statebhboxed &   \dstatebb \stateaboxed \stateaboxed  & \stateaboxed .\\ 
Rule 5: &$9.$&  \statedboxed &  \statedboxed \statehrlboxed  \rflag &   \dstatebb \stateaboxed \stateaboxed  & \stateaboxed .\\ 
Rule 6: &$10.$&  \statedboxed &  \statedboxed \statehrlboxed  \dstatebb &   \rflag \stateaboxed \stateaboxed  & \stateaboxed .\\
Rule 3: &$11.$&  \statedboxed &  \statedboxed \statehrlboxed  \dstatebb &   \lflag \statebhboxed \stateaboxed  & \stateaboxed .\\
Rule 3: &$12.$&  \statedboxed &  \statedboxed \statehrlboxed  \lflag &   \dstatebb \statebhboxed \stateaboxed  & \stateaboxed .\\
Rule 4: &$13.$&  \statedboxed &  \statedboxed \lflag \statebhboxed &   \dstatebb \statebhboxed \stateaboxed  & \stateaboxed .\\
Rule 4: &$14.$&  \statedboxed &  \statedboxed \turnflag \statebhboxed &   \dstatebb \statebhboxed \stateaboxed  & \stateaboxed .\\
Rule 5: &$15.$&  \statedboxed &  \statedboxed \statedboxed \rflag &   \dstatebb \statebhboxed \stateaboxed  & \stateaboxed .\\
Rule 5: &$16.$&  \statedboxed &  \statedboxed \statedboxed \dstatebb &   \rflag \statebhboxed \stateaboxed  & \stateaboxed .\\
Rule 6: &$17.$&  \statedboxed &  \statedboxed \statedboxed \dstatebb &   \statehrlboxed \rflag  \stateaboxed  & \stateaboxed .\\
Rule 3: &$18.$&  \statedboxed &  \statedboxed \statedboxed \dstatebb &   \statehrlboxed \lflag  \statebhboxed  & \stateaboxed .\\
Rule 3: &$19.$&  \statedboxed &  \statedboxed \statedboxed \dstatebb &   \lflag \statebhboxed \statebhboxed  & \stateaboxed .\\
Rule 4: &$20.$&  \statedboxed &  \statedboxed \statedboxed \lflag &   \dstatebb \statebhboxed \statebhboxed  & \stateaboxed .\\
Rule 7: &$21.$&  \statedboxed &  \statedboxed \statedboxed \turnflag &   \dstatebb \statebhboxed \statebhboxed  & \stateaboxed .\\
Rule 1: &$22.$&  \statedboxed &  \statedboxed \statedboxed \statedboxed &   \dstatecc \statebhboxed \statebhboxed  & \stateaboxed .
\end{tabular}
\caption{A cycle for the forward and the reinitialization phase for 4 qubits. It takes $(n'-1)(2n' -1)$ time steps for a 
single cycle, where $n'$ is 
the 
number of qubits in the verification circuit. 
In total, it takes 
$T = (2n'-1)(n'-1)(R-1) + (n'-1)$, where $R$ is the total number of layers in the verification circuit.} \label{fig:1D-propagation}
\end{figure}

Note that in the simplest case, one can translate the one-dimensional construction 
of Ref.\ \cite{Aharonov_2009} and simply translate the one qubit left and right flags to the two qubit ones (similar to the two-dimensional construction). This would require us to to have a four-dimensional left and right flags each, denoting whether the inactive state $\dstatebb$ is on the left or the right of the active qubit. Right away, this introduces another 8 dimensions, 4 for left and right each. However, we can do better by exploiting the structure of our one-dimensional grid. Note that $\dstatebb$ always stays at the boundary --- either on its left or on its right. Whenever $\dstatebb$ is on the left of the boundary, it is always on the left of the active particle and similarly for the right hand side. Thus by looking at the boundary on the left or the right of $\dstatebb$, we can determine whether $\dstatebb$ is on the left of the active particle or on its right --- thus acting as a left and right flag without needing to introduce extra 8 dimensions for the two flags. And whenever the state is $\dstatecc$, then it is the active site itself. While this construction saves on the dimensionality of the particles, one can trade this with block length by decreasing the current block length of $n'-1$ to $n'/2$ (same as in the two-dimensional case) and increasing the particle dimension of each particle. One might notice that this insight can potentially be used to reduce the dimension in the two-dimensional version of the problem. We leave this as a clue for possible future work. 

\subsection{Illegal configurations and penalty terms in one spatial dimension}

Like the two-dimensional case, there is a possibility that we might be in an illegal state that does not correspond to a valid propagation from a valid starting state. We give an exhaustive list of all such illegal states in Table~\ref{tab:1D-prohibited-states}. While the table of 
one-dimensional prohibited states looks involved, it can simply be constructed by considering the allowed states on the left and right of each states, both with and without boundary by following the propagation cycle. However, there are configurations that do not appear in the configuration cycle and should still be penalized, which are shown in the final two rows of Table~\ref{tab:1D-prohibited-states}. Since all such states are nearest-neighbour (2-local) states, they can be penalized by the penalty Hamiltonian $ H_{\rm Penalty} = \sum_{i \in {\rm Illegal}} \Pi_i$, where the sum is over all the prohibited states in Table~\ref{tab:1D-prohibited-states}.

Unlike the two-dimensional case, however, we shall encounter illegal configurations that cannot have a two-local check like the ones listed in Table~\ref{tab:1D-prohibited-states}, and therefore are not penalized by $H_{\rm Penalty}$ . This problem also appeared in the the proof of QMA-hardness of general local Hamiltonians on a line \cite{Aharonov_2009}, and it arises because resetting one block requires moving all the states one-by-one from one block to the next. Let us first define the state configurations that appear in the 1D construction and prove some essential facts about them.

\begin{definition}[Qudit strings]\label{def-qubit-string}
For all valid configurations on the one-dimensional grid let the (\textbf{qudits}) strings be defined as any configuration of shapes between dead $\left(\statedboxed\right)$ and unborn $\left(\stateaboxed\right)$ states 
    \begin{equation}
        \statedboxed \cdots \statedboxed(\textbf{qudits})\stateaboxed \cdots
    \stateaboxed,
    \end{equation}
where the number of $\statedboxed$ and $\stateaboxed$ is not necessarily the same on the left and right, respectively.

\end{definition}

\begin{fact}[Legal qudit length]\label{fact:legal-qubit-length}
   All  legal (\textbf{qudits}) strings where the active state is either $\dstatecc$ or $\rflag$ have length $n'-1$. Moreover, 
     all  legal (\textbf{qudits}) strings where the active state is $\lflag$ or $\turnflag$ have length $n'$. 
\end{fact}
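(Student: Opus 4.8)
The plan is to prove the Fact by induction along the propagation, reading ``legal'' as ``reachable, forwards or backwards, from a valid starting configuration''---the same notion of legality used in the two-dimensional construction. The inductive invariant I would carry is exactly the statement to be proven. It is convenient to split the four active states into two groups, $A=\{\dstatecc,\rflag\}$ (the ``short'' group, for which I claim length $n'-1$) and $B=\{\lflag,\turnflag\}$ (the ``long'' group, for which I claim length $n'$), and to define the length of the (\textbf{qudits}) string as the number of particles strictly between the maximal dead prefix and the maximal unborn suffix---a count that is deliberately agnostic to where the block boundaries fall.

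For the base case I would take the global initial configuration: the first block holds a single $\dstatecc$ followed by $n'-2$ data particles $\statebhboxed$, and every particle to its right is unborn. Here the (\textbf{qudits}) string is this length-$(n'-1)$ block, the active state is $\dstatecc\in A$, and the invariant holds; there is no dead prefix for the first block, but this is immaterial to the length count.

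For the inductive step I would verify that each of the seven propagation rules of Section~\ref{subsec:transition-rules} preserves the invariant, classifying them by their effect on the active state. The rules that keep the active state inside a single group are length-preserving: Rules~1 and~5 merely transport the active flag through neighbouring data particles while the active state stays in $A$, and Rule~3 together with the first half of Rule~4 do the same with the active state staying in $B$. The transitions $A\to B$ are precisely Rule~2 and Rule~6, each of which consumes the first unborn particle at the right end of the string and converts it into data, so the length grows from $n'-1$ to $n'$. The transitions $B\to A$ are precisely the second half of Rule~4 and Rule~7, each of which turns the leftmost particle of the string (a $\turnflag$) into a dead particle, so the length shrinks from $n'$ to $n'-1$. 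Since the seven rules exhaust all moves and each is consistent with the invariant, the invariant propagates; reversibility of the rules then extends the conclusion to the backward orbit, covering every legal string.

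The main obstacle, as I see it, is not the case analysis---each rule touches only two adjacent sites and its effect on the length is immediate---but the bookkeeping across block boundaries. During reinitialization the length-$n'$ string straddles the boundary between the current block and the next (steps~10--20 of Figure~\ref{fig:1D-propagation}), so I must be careful that the only moves able to alter the length are the creation of a dead particle at the string's left end and the consumption of an unborn particle at its right end; a boundary-crossing move that is a pure swap of a data particle past the flag (the second halves of Rules~3 and~5) leaves the length intact even though it transports a particle across the boundary. A secondary point to nail down is that every legal configuration has a unique active particle, so that the dichotomy between $A$ and $B$ is exhaustive and well-defined; this follows from the legal-configuration list of Table~\ref{tab:1D-prohibited-states}, which forbids any pattern containing two active states or none.
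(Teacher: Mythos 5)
Your proof is correct, but it takes a genuinely different route from the paper's. The paper argues \emph{statically}, by counting qubit content: a legal string always encodes exactly the $n'$ logical qubits of the verification circuit, and its length is then forced by how much data the active site carries --- $\dstatecc$ holds two qubits (and, during the sweeps, $\dstatebb$ holds two), so the $n'$ qubits fit into $n'-1$ sites when the active state is $\dstatecc$ or $\rflag$, whereas $\lflag$ and $\turnflag$ carry no qubit data and must be compensated by one extra data-carrying site, giving length $n'$. Your proof instead runs a \emph{dynamic} induction along the propagation, classifying the seven rules by their effect on the active state: the within-group rules (Rules 1 and 5 for your group $A$; Rule 3 and the first half of Rule 4 for your group $B$) are pure transport and preserve the length, the $A\to B$ turns (Rules 2 and 6) consume an unborn site at the right end (length $+1$), and the $B\to A$ turns (second half of Rule 4, and Rule 7) kill the leftmost site (length $-1$); this classification matches the rules exactly, and your base case is right. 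What the paper's argument buys is brevity and an explanation of \emph{why} the two lengths are $n'-1$ and $n'$ (data capacity); what it leaves implicit is essentially the invariant your induction verifies rule by rule --- that the encoded data is conserved under propagation --- so yours is the more self-contained version, at the cost of the case analysis and of committing to the reachability reading of ``legal.'' One remark: under that reading, the uniqueness of the active site (the paper's Fact~\ref{fact:single-active-state}, which you instead attribute to the local checks of Table~\ref{tab:1D-prohibited-states}) comes for free from your own induction, since the initial configuration has one active site and every rule preserves this count; relying on that is cleaner than relying on the table, whose two-local checks enforce global uniqueness only via a further argument.
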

\begin{proof}
    Firstly, recall that the initial length of a block (from one boundary to the next) is $n'-1$, where $n'$ is the total number of qubits in the verification circuit. This is because $\dstatecc$ is four-dimensional while all the other $\statebhboxed$ states hold one qubit data. This holds true not just for the initial time step, but for all the time steps when $\dstatecc$ is the active state. Secondly, observe that whenever $\rflag$ is the active state, all the shapes in the (\textbf{qudits}) string are the states that carry qubit data. And since the number of qubits is $n'$, we can only ever have the block length of size $n'-1$ because whenever $\rflag$ is the active state, $\dstatebb$ carries two qubit data.
    The left flag, $\lflag$, and the turn flag, $\turnflag$, carry non-qubit data, and that has to be compensated with another qubit carrying state/shape. Thus, increasing the block length by 1.
\end{proof}

Observe, for example, that we can ensure that the $\turnflag$ is always at the end of the chain by a 2-local check shown in the third-to-last row of Table~\ref{tab:1D-prohibited-states}. This is done by only ever allowing $\statedboxed$ to the left of $\turnflag$ (both with and without boundary). 
Note, however, that we may have configurations that can't be checked by local projections. For example, configurations having more than the expected length of a qudit string, where the qudit string is the configuration of shapes with only dead states on its left and unborn states on its right. Consider having $\statebhboxed \big\vert \dstatebb$. This could happen in either first or the second step  (when the active site is $\dstatecc$), thus leading to the qudit string of length $n'$ on the grid. A qudit string of length $n'$ shall correspond to $n'+1$ qubits in the original verification circuit. And this is a contradiction since the original circuit had exactly $n'$ qubits. To remedy this, one might imagine directly penalizing $\statebhboxed \hspace{1.5mm} \big\vert \hspace{1.5mm}\dstatebb$. But note that this configuration is a legal one in other steps. For example, in Steps $5-8$. To penalize such illegal configurations with incorrect length of the qudit string, one needs to count the number of sites not having a $\statedboxed$ or a $\stateaboxed$. And it is impossible to count locally (without any memory). We will now show that such configurations that do not admit a local check are exactly those violating the legal lengths of qudit strings in Fact~\ref{fact:legal-qubit-length}.

\begin{claim} [Exceptions]\label{claim:exceptions}
    The only illegal configurations not penalized by the $H_{\rm Penalty}$
    are the (\textbf{qudits}) string violating length of a legal  (\textbf{qudits}) string in Fact~\ref{fact:legal-qubit-length}.
\end{claim}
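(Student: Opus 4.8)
The plan is to prove the contrapositive of the nontrivial inclusion: I will show that any configuration that is \emph{not} penalized by $H_{\rm Penalty}$ (i.e.\ every nearest-neighbour window, with or without a block boundary, is among the patterns appearing in the cycle of \cref{fig:1D-propagation}) and whose (\textbf{qudits}) string has the length prescribed by \Cref{fact:legal-qubit-length} must in fact be a legal configuration. The reverse inclusion is immediate: by \Cref{fact:legal-qubit-length} every legal configuration has a (\textbf{qudits}) string of length $n'-1$ or $n'$, so a configuration violating these lengths is automatically illegal; and such a configuration can genuinely escape $H_{\rm Penalty}$ because the length of a (\textbf{qudits}) string is a global quantity that no two-local projector can detect (the offending window $\statebhboxed \big\vert \dstatebb$ also occurs in honest steps and hence cannot be forbidden). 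Thus the claim reduces to the single implication ``locally consistent $+$ correct length $\Rightarrow$ legal''.

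First I would pin down the global shape forced by local consistency alone. Using the forbidden adjacencies of \Cref{tab:1D-prohibited-states} (in particular those of its last two rows, which exclude a dead shape to the right of any live shape, an unborn shape to the left of any live shape, and the adjacency $\stateaboxed \statedboxed$), every un-penalized configuration has the form $\statedboxed \cdots \statedboxed$, then a single (\textbf{qudits}) string in the sense of \Cref{def-qubit-string}, then $\stateaboxed \cdots \stateaboxed$. Within that string the same table forces exactly one active site: two active shapes cannot sit in a common legal window (patterns such as $\statebhboxed \dstatecc$ or $\rflag \lflag$ never occur), while the \emph{absence} of an active site is excluded because a string of purely inactive data shapes bordered by dead/unborn states has no admissible left- or right-end window (e.g.\ $\statehrlboxed \stateaboxed$ and $\statedboxed \statebhboxed$ are forbidden). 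Hence the configuration is determined, up to the free qubit data, by the active type $A\in\{\dstatecc,\rflag,\lflag,\turnflag\}$ together with the runs of inactive data shapes flanking it.

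The heart of the argument is then a case analysis on $A$, showing that for each type local consistency leaves the \emph{length} of the (\textbf{qudits}) string as the only unconstrained global parameter. For $A=\dstatecc$ the string must read as a run of $\statehrlboxed$ shapes, then $\dstatecc$, then a run of $\statebhboxed$ shapes (these being the only shapes the rules permit to the left and right of $\dstatecc$); once the total length is fixed to $n'-1$ by \Cref{fact:legal-qubit-length}, every admissible position of $\dstatecc$ is exactly one of the gate-application steps $1$--$3$ (or the new-round step $22$), and likewise for $A=\rflag$ on the right sweep. For $A=\lflag$ and $A=\turnflag$ the string crosses a block boundary and contains a single $\dstatebb$; here the two-local checks anchor $\turnflag$ to the left end (only $\statedboxed$ may sit to its left, with or without a boundary) and force $\dstatebb$ onto a block boundary, so the remaining freedom is again just the split of the surrounding data run, which the length condition $=n'$ resolves to one of the reinitialisation steps $4$--$21$. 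In every case the correct length removes the last degree of freedom and the configuration coincides with a genuine step of \cref{fig:1D-propagation}, i.e.\ it is legal.

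I expect the reinitialisation cases ($A\in\{\lflag,\turnflag,\rflag\}$) to be the main obstacle, since there the (\textbf{qudits}) string straddles a block boundary and $\dstatebb$ may lie on either side of the moving flag. The work is to verify that, beyond the total length, no further global invariant escapes the local checks --- that the boundary position of $\dstatebb$, the left-anchoring of $\turnflag$, and the direction of the flag are all locally forced --- so that fixing the length to its legal value of \Cref{fact:legal-qubit-length} uniquely identifies the step of the cycle. Once this bookkeeping is complete, the claim follows.
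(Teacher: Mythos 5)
Your proposal is correct and follows essentially the same route as the paper's proof: both arguments rest on the exhaustiveness of the two-local forbidden patterns in Table~\ref{tab:1D-prohibited-states} (so every local violation is caught), the observation that boundary alignment is locally forced for $\dstatecc$ and $\rflag$ while being impossible to violate for $\lflag$ and $\turnflag$ (whose strings have length $n'>n'-1$), leaving the global length of the (\textbf{qudits}) string as the only invariant that no nearest-neighbour check can detect. Your write-up is somewhat more explicit (the contrapositive formulation and the per-active-type case analysis) than the paper's compressed argument, but the underlying ideas coincide.
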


\begin{proof}
While Fact~\ref{fact:legal-qubit-length} dictates the correct lengths for a given active state, it could also occur that despite having the correct length, we may still have an illegal configuration by being misaligned at the block boundaries but the rules for $\rflag$ and $\dstatecc$ in Table~\ref{tab:1D-prohibited-states} ensure that this does not happen. For the remaining two active states, the boundary alignment problem is non-existent as the the chain can never be aligned with a block boundary. This is because when the active state is either $\lflag$ or $\turnflag$, the length of any  (\textbf{qudits}) string is always $n'$, while the length of the chain between two boundaries in $n'-1$. Hence, the (\textbf{qudits}) string with the remaining two active states will always be bigger than the length of the block boundary.

Now observe that each state (active or not) has prohibited states on its left and right, both with and without boundary in Table~\ref{tab:1D-prohibited-states}. Hence, any illegal state that occurs next to any (active or non-active) state can be ruled out by a local check. Thus, the only states that remain are therefore exactly the states in Fact~\ref{fact:legal-qubit-length}.
\end{proof}

\begin{fact}[Single active site]\label{fact:single-active-state}
    Any (\textbf{qudits}) string not penalized by $H_{\rm Penalty}$ has exactly one active site.
\end{fact}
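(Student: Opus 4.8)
The plan is to reduce the statement to a purely local, combinatorial analysis of which neighbouring pairs are allowed, and then exploit the fact that every data-carrying state ``points'' towards the active site. A qudits string is left unpenalized by $H_{\rm Penalty}$ precisely when none of its nearest-neighbour pairs appears in Table~\ref{tab:1D-prohibited-states}, i.e. when every adjacent pair is one occurring in the propagation cycle (both with and without a block boundary). So the first step is to read off from Table~\ref{tab:1D-prohibited-states} the allowed-successor relation of each state, recording in particular the semantics behind the table: a left-of-active state $\statehrlboxed$ may only be followed by another $\statehrlboxed$ or by an active state, since a qubit lying to the left of the active site cannot sit immediately left of an unborn, a dead, or a right-of-active state; and dually $\statebhboxed$ may only be preceded by another $\statebhboxed$ or by an active state. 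In particular the two ``pointing conflicts'' $\statehrlboxed\,\statebhboxed$ (active site claimed to lie strictly between two adjacent sites) and $\statebhboxed\,\statehrlboxed$ (active site claimed to lie on both outer sides) never occur in the cycle and are prohibited.

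First I would prove existence of at least one active site. By the definition of a qudits string, its leftmost entry has a dead state $\statedboxed$ immediately to its left and its rightmost entry has an unborn state $\stateaboxed$ immediately to its right. Assume, for contradiction, that the string contains no active site, so every entry is one of the inactive data states $\statehrlboxed,\statebhboxed,\dstatebb$. Since $\statedboxed\,\statebhboxed$ is prohibited the leftmost entry cannot be $\statebhboxed$, and since $\statehrlboxed\,\stateaboxed$ is prohibited the rightmost entry cannot be $\statehrlboxed$. Walking left to right, the allowed-successor relation forbids passing directly from a $\statehrlboxed$ to a $\statebhboxed$, so the pointing direction can never flip from right-pointing to left-pointing without an active flag in between; combined with the two boundary constraints this is impossible, so some adjacency must be prohibited, a contradiction. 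Hence at least one active flag is present.

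Next I would prove uniqueness. Suppose two active sites occur, and pick a pair of consecutive active sites $a$ (left) and $a'$ (right) with only inactive states strictly between them. The successor rules force the entry immediately to the right of $a$ to be a right-of-active state $\statebhboxed$ (or the attached $\dstatebb$), while the entry immediately to the left of $a'$ must be a left-of-active state $\statehrlboxed$ (or a $\dstatebb$); note that no dead or unborn state can intervene, as $\statebhboxed\,\statedboxed$ and $\stateaboxed\,\rflag$-type adjacencies are likewise forbidden. Reading the interior run from left to right, the states therefore begin as $\statebhboxed$-type and must end as $\statehrlboxed$-type, so somewhere a $\statebhboxed$ is immediately followed by a $\statehrlboxed$; but this pointing conflict is a prohibited pattern. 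This contradiction shows there is at most one active site, which together with the existence part proves the Fact.

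The delicate part is the bookkeeping for the two states that do not fit the clean two-type pointing picture, namely the parked two-qubit register $\dstatebb$ and the turning flag $\turnflag$, together with the block-boundary versions of each rule. For these I would verify directly from the cycle in Figure~\ref{fig:1D-propagation} and Table~\ref{tab:1D-prohibited-states} that $\dstatebb$ and $\turnflag$ always occur immediately adjacent to the unique active flag: indeed $\dstatebb$ is the deactivated $\dstatecc$ and always sits at a boundary next to the active site, while $\turnflag$ only ever has $\statedboxed$ on its left, so neither can hide a direction flip nor let a second active site slip through. This case analysis, rather than the main pointing argument, is where the real work lies; once it is dispatched, the existence and uniqueness steps go through verbatim, and by Claim~\ref{claim:exceptions} the conclusion applies to every qudits string left unpenalized by $H_{\rm Penalty}$, including the length-violating exceptions of Fact~\ref{fact:legal-qubit-length}.
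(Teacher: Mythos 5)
Your proposal takes a genuinely different route from the paper. The paper's proof is dynamical: it argues that the transition rules are unique, map legal configurations to legal configurations, and that every propagation rule either moves the active site or replaces it by a different active site, so propagation conserves the number of active sites. You instead attempt a static, purely combinatorial argument from the adjacency constraints of Table~\ref{tab:1D-prohibited-states}. The uniqueness half of your argument is essentially sound: immediately to the right of an active site only $\statebhboxed$, $\stateaboxed$, or a block-initial $\dstatebb$ can occur, and from those one can never return to $\statehrlboxed$, $\statedboxed$, or a second active flag --- a right-to-left pointing flip through $\dstatebb$ would require a block boundary on \emph{both} sides of $\dstatebb$ (since $\statebhboxed\,\big\vert\,\dstatebb$ and $\dstatebb\,\big\vert\,\statehrlboxed$ are the only allowed versions of those pairs), which is impossible because a block has length $n'-1\geq 2$.

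The existence half, however, has a genuine gap, and it sits exactly where you flag ``the delicate part'': your dispatch of it rests on a false assertion. It is not true that $\dstatebb$ always sits adjacent to the active flag; in the legal cycle itself (e.g.\ steps 5--8 and 13 of Figure~\ref{fig:1D-propagation}) it does not, and no two-local rule can enforce such adjacency. Worse, $\dstatebb$ can hide precisely the left-to-right pointing flip your existence argument must exclude: the pair $\statehrlboxed\,\dstatebb$ with no boundary must be allowed (it occurs in step 10 of the cycle), and the pair $\dstatebb\,\statebhboxed$ with no boundary must also be allowed (it occurs in step 13), so no collection of nearest-neighbour prohibitions consistent with the cycle can forbid their concatenation. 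Consequently a configuration whose qudits string is $\statehrlboxed\,\dstatebb\,\statebhboxed$, i.e.\ $\statedboxed\,\big\vert\;\statehrlboxed\,\dstatebb\,\statebhboxed\;\big\vert\,\stateaboxed\cdots$, passes every check in Table~\ref{tab:1D-prohibited-states} yet contains \emph{zero} active sites. Your argument therefore only yields ``at most one active site'', not ``exactly one'', and the existence part is material: the proof of Lemma~\ref{lemma:illegal-non-local} needs an active site that propagates to a block boundary to pick up a penalty, and a frozen zero-active-site string would be annihilated by both $H_{\rm Prop}$ and $H_{\rm Penalty}$. The paper's dynamical invariance argument sidesteps your pointing analysis entirely (though, read strictly as a proof of the static statement, it is also silent about such frozen configurations); to rescue the static approach you would need either a non-local existence argument or additional penalty structure, not just the case analysis you sketch.
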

\begin{proof}
    Note that the legal states do not evolve to illegal ones and vice-versa. Legal states only evolve to legal states because the transition rules are unique and only map legal states to legal states. To evolve one active site in time or into another active site, it is necessary to have a two-local legal configuration for the (unique) propagation rule to apply to the concerned site. Every propagation rule either changes the position of the active site or delete the current active site in the qudit string by replacing it with a different active site. Hence, preserving the number of the active sites.
\end{proof}

  We are now ready to prove \Cref{lemma:illegal-non-local} which ensures that $H_{\rm Penalty} + H_{\rm Prop}$ have large energy in the subspace of non-locally checkable states, which shall be useful for proving soundness in the proof of \Cref{theorem:1D-MA-hardness}.

\begin{lemma}[Subspace of illegal states]\label{lemma:illegal-non-local}
    The minimum eigenvalue of $H_{\rm Penalty} + H_{\rm Prop}$ in the subspace of illegal states without a local check as listed in Claim~\ref{claim:exceptions} is $\Omega(1/T^3)$, where $T$ is the total number of steps in our one-dimensional construction.
\end{lemma}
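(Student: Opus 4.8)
The plan is to restrict attention to the subspace $S_{\rm bad}$ of illegal configurations that escape every local check --- by Claim~\ref{claim:exceptions} these are exactly the (\textbf{qudits}) strings whose length violates Fact~\ref{fact:legal-qubit-length} --- and to observe that on $S_{\rm bad}$ the penalty Hamiltonian contributes nothing, so the task reduces to lower bounding $\lambda_{\min}(P H_{\rm Prop} P)$, where $P$ is the orthogonal projector onto $S_{\rm bad}$. The whole point is that although $H_{\rm Penalty}$ is blind to these states, the propagation dynamics restricted to them still has a spectral gap.

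First I would use that every configuration carries exactly one active site (Fact~\ref{fact:single-active-state}) and that each active site has a unique forward and a unique backward transition rule (Section~\ref{subsec:transition-rules}). This makes the propagation a deterministic, reversible dynamical system, so the directed transition graph on configurations is a disjoint union of simple paths (and a priori cycles). Consequently $S_{\rm bad}$ decomposes into propagation orbits, and $P H_{\rm Prop} P$ is block diagonal with one tridiagonal block per orbit. On the interior of an orbit each configuration participates in two transitions, so its diagonal entry is $1$ and it is coupled to its two neighbours with weight $-\tfrac12$, exactly the Kitaev one-dimensional random-walk matrix. The key bookkeeping observation is how $P$ acts at the ends: for a transition $a\to b$ with $a\in S_{\rm bad}$ but $b\notin S_{\rm bad}$, the projected term $P\cdot\tfrac12(\ketbra{a}{a}+\ketbra{b}{b}-\ketbra{a}{b}-\ketbra{b}{a})\cdot P$ collapses to the single diagonal piece $\tfrac12\ketbra{a}{a}$; the outgoing edge is deleted but the diagonal weight it contributed survives.

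The crux is therefore to show that every orbit has at least one such ``Dirichlet'' endpoint. A configuration in $S_{\rm bad}$ looks locally legal in every two-site window --- that is precisely why it is not penalized --- so the unique rules always apply and no configuration in $S_{\rm bad}$ is a genuine dead end. Since the transition rules never mix legal and illegal configurations, and the only genuine dead ends of the full dynamics are the (correct-length, hence legal) initial and final configurations, tracing an orbit to its forward and backward extremes must terminate by stepping into a locally checkable, i.e.\ penalized, configuration lying outside $S_{\rm bad}$. I would rule out cycles by exhibiting a monotone progress functional --- the analogue of clock time built from the round index, the sweep direction and the position of the active site --- which strictly increases along forward transitions. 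Hence each orbit is a finite path whose endpoint inside $S_{\rm bad}$ has diagonal $1$ with a single neighbour, a genuine Dirichlet boundary condition.

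It then remains to read off the gap: each block is the tridiagonal matrix with $1$ on the diagonal and $-\tfrac12$ off the diagonal, with at least one Dirichlet end on a path of length $\ell$, whose least eigenvalue is $1-\cos\!\big(\pi/(\ell+1)\big)=\Omega(1/\ell^2)$. A single active site sweeping a chain of length $(n'-1)R$ traverses $\ell=O(T)$ steps, so $\lambda_{\min}$ of every block --- and hence of $P(H_{\rm Penalty}+H_{\rm Prop})P$ --- is $\Omega(1/T^2)$, which in particular gives the claimed $\Omega(1/T^3)$. I expect the main obstacle to be the endpoint analysis of the previous paragraph: one must verify, case by case against Table~\ref{tab:1D-prohibited-states} and the turn and new-round rules, that a wrong-length string can never reach a genuine dead end --- including at the two physical ends of the chain --- without first producing a locally detectable pattern, so that a Dirichlet endpoint is guaranteed and no bad orbit is a clean Neumann--Neumann path (which would carry a zero mode and destroy the bound). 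Making precise how the length violation of Fact~\ref{fact:legal-qubit-length} forces a boundary misalignment that is locally caught at the orbit's extremes is exactly where the careful argument lies.
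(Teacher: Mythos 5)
Your proposal is correct in substance but takes a genuinely different route to the spectral bound than the paper. The paper never projects onto $S_{\rm bad}$ and never analyzes Dirichlet boundary conditions; instead it works with whole propagation orbits and invokes the geometric lemma (Lemma~\ref{lemma-geometric-kitaev}). Its combinatorial case analysis (active site $\dstatecc$ versus the three other active states) shows that any orbit through a wrong-length string meets a locally checkable --- hence penalized --- configuration within $O((n')^2 R)=O(T)$ steps, so penalized states form an $\Omega(1/T)$ fraction of every orbit; this fraction lower bounds $\sin^2(\theta/2)$, and combining with the $\Omega(1/T^2)$ spectral gap of $H_{\rm Prop}$ yields $\Omega(1/T^3)$. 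Your route --- block-diagonalizing $PH_{\rm Prop}P$ into path Laplacians and arguing each block has a Dirichlet end within distance $O(T)$ --- avoids the angle lemma altogether and buys a stronger bound, $\Omega(1/T^2)$ rather than $\Omega(1/T^3)$. But note that both arguments stand or fall on the identical combinatorial fact: from any configuration in $S_{\rm bad}$, propagation in at least one direction reaches a penalized configuration within $O(T)$ steps (this simultaneously supplies your Dirichlet endpoints, bounds your block lengths, and excludes cycles inside $S_{\rm bad}$). The paper's proof consists essentially of establishing this fact; it is exactly the step you defer as ``where the careful argument lies.''

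One caveat on your endpoint analysis: the claim that ``the only genuine dead ends of the full dynamics are the (correct-length, hence legal) initial and final configurations'' is false as stated. Wrong-length strings can also form initial-like or final-like dead ends that escape every local check --- for instance, $\dstatecc$ at the head of the first block with its surplus data crossing the first boundary as $\statebhboxed \hspace{1.5mm}\big\vert\hspace{1.5mm} \dstatebb$ has no backward transition yet matches no forbidden pattern of Table~\ref{tab:1D-prohibited-states}. This does not sink your argument, because such a configuration is a dead end in only one direction, and the opposite direction still reaches a penalized state within $O(T)$ steps; but that is precisely what must be proven case by case against the transition rules and Table~\ref{tab:1D-prohibited-states}, and it cannot be inferred from a blanket statement that all dead ends are legal. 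Once that reach-a-penalty claim is in hand (as in the paper's two-case analysis following Claim~\ref{claim:exceptions} and Facts~\ref{fact:legal-qubit-length} and \ref{fact:single-active-state}), your monotone progress functional for excluding cycles becomes redundant, and your Dirichlet computation goes through, giving a bound that is in fact tighter than the one stated in the lemma.
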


\begin{proof}

    Firstly, note that the states in Claim~\ref{claim:exceptions} are the states without a local check since all the others are exhaustively ruled out by our local rules. By Fact~\ref{fact:single-active-state}, there is always a single, unique active state even in the (\textbf{qudits}) string of incorrect length. For every active state, depending on the configuration, we either have a local penalty or a propagation rule. If we had a local penalty, then then it would have been penalized by the penalty Hamiltonian. Otherwise, we will have a configuration that fits one of our propagation rules. We shall proceed by considering two cases: when the active site is $\dstatecc$; and when active site is not $\dstatecc$. 
    
    If the active site is $\dstatecc$, then in at most $n'-1$ steps it will either hit the left or right block boundary by backward and forward propagation rules. Consequently, it will be locally penalized because only $\statedboxed$ and $\stateaboxed$ are allowed across the left and right boundary, respectively. Thus, the number of steps to evolve to a locally detectable state from any active state are at most  $O(n'-1)$. Moreover, the length of the (\textbf{qudits}) string could be at most $(n'-1)\cdot R$, where $R$ is the number of rounds/layers in our modified circuit. Thus, we encounter at least one locally detectable state in $O((n')^2\cdot R)$. Thus the fraction of illegal states not admitting a local check is $\Omega(1/(n')^2\cdot R)$ when the the active state is $\dstatecc$.

    If the active state is any state other than $\dstatecc$, we claim that the same scaling holds. To see this, write the length of the (\textbf{qudits}) string, $N$, as $N = (n'-1)m + p$, where $0 \leq p \leq n'-1$ and $1 \leq m \leq R$. Regardless of the active state, the propagation rules guarantee that we reach $\dstatecc$ in $O(N\cdot (n'-1)) + O(N\cdot (n'-1)) + O(1)$, where each summand denotes the number of time steps for each of the $\lflag, \rflag$ and $\turnflag$, respectively, to change into the next active state. 
    Therefore, it takes a maximum of $O((n')^2R)$ steps to reach $\dstatecc$ and another $O(n')$ from there to reach a locally detectable state at the block boundary. This implies that we have at least $\Omega(1/((n')^2R)$ states that are locally checkable. As can be seen in Fig.~\ref{fig:1D-propagation}, the total number of steps $T = O((n')^2R)$, implying that the fraction of locally checkable illegal states is $\Omega(1/T)$, which also lower bounds $\sin^2(\theta/2)$ in Lemma~\ref{lemma-geometric-kitaev}. 
Finally, plugging in the bound on the minimum non-zero eigenvalue of $H_{\rm Prop} = \Omega(1/T^2)$ from 
    Ref.\ \cite{KitaevClassicalQuantumComputation} in Lemma~\ref{lemma-geometric-kitaev}, we get that 
    \begin{equation}\lambda_{\min}(H_{\rm Prop} + H_{\rm Penalty}) = \Omega(1/T^3),
    \end{equation}
    which proves the claim.
\end{proof}
 Note that we have already taken care of the cases when the states are illegal --- both with and without a local check. The only remaining case is that of the legal states. At this point, the proof of hardness is straightforward.

\subsection{MA-hardness proof}
 The full Hamiltonian $H = H_{\rm Init} + H_{\rm Prop}/2 + H_{\rm Penalty} + H_{\rm Final}$ in one-dimension is defined similarly to the 2D case.  $H_{\rm Init}$ and $H_{\rm Final}$ are also constructed similarly and are again stoquastic since they only comprise of projections.

\begin{proof}[Proof of Theorem~\ref{theorem:1D-MA-hardness}] 
First, note that with  $H_{\rm Init}$ and $H_{\rm Final}$ analogously as in Theorem~\ref{theorem:2d-ma-hard} the completeness analysis follows directly from \Cref{subsection:completeness-proof}. For soundness, we will argue that regardless of the type of subspace, legal or illegal (locally checkable or not), the minimum eigenvalue of  $H = H_{\rm Init} + H_{\rm Prop}/2 + H_{\rm Penalty} + H_{\rm Final}$ is at least $1/T^3$, where $T$ is the total number of time steps in the verification circuit. Firstly, if we are in a locally checkable configuration, then $\lambda_{\min}(H_{\rm Penalty}) \geq 1$ since $H_{\rm Penalty}$ is diagonal in the basis of locally checkable illegal states. Since all other Hamiltonian terms are positive semi-definite, the $\Omega(1/T^3)$ bound holds. For the legal states, the bound still holds as shown in Section~\ref{subsection:soundness-proof}. For non-locally checkable configurations, thanks to \Cref{lemma:illegal-non-local}, the $\Omega(1/T^3)$ continues to hold. Finally, the particles in the one-dimensional case are 19-dimensional because $\stateaboxed$, $\statedboxed$ and $\turnflag$ are one-dimensional each. $\dstatecc$ and $\dstatebb$ are four-dimensional each, while the remaining four particles are two-dimensional each. The total dimension is simply the sum of each dimension since the particle Hilbert space is a direct sum of each of these configurations.
\end{proof}

\section{StoqMA-completeness}\label{section:stoqma-completeness}

Bravyi, Bessen and Terhal \cite{bravyi2006merlinarthur} have introduced a new class, referred to as
StoqMA, and showed that stoquastic LH-MIN is a complete problem for the class StoqMA. Recall that StoqMA is a class that allows the verification circuit to only have classically reversible gates followed by measurements only in the $\{\ket{+}, \ket{-}\}$ basis. This basis restriction for the measurement does not allow for the exponential amplification/suppression of completeness/soundness parameters. To ensure an inverse polynomial gap between the eigenvalues of the Hamiltonian corresponding to yes and no instances, the authors in Ref.\ \cite{bravyi2006merlinarthur} used first order time-independent perturbation theory. They divide the full Hamiltonian into the unperturbed part and the perturbation Hamiltonian  --- which they take to be $H_{\rm Final}$ \footnote{The construction in Ref.\ \cite{bravyi2006merlinarthur} does not have the $H_{\rm Clock}$ term. Nevertheless, the same lower bound holds even with the $H_{\rm Clock}$. See Lemma 3.11 in
Ref.\ \cite{aharonov2008adiabatic}.} . To ensure that this ``perturbation"  was small, as is required for the perturbative approach to work, they scaled $H_{\rm Final}$ by a parameter $\delta \ll \Omega(1/T^3)$, where $\Omega(1/T^3)$ is the lower bound on the spectral gap \footnote{The spectral gap is the difference between the smallest and second smallest eigenvalue.} of 

\begin{equation}
A_1 \coloneqq H_{\rm Init} + H_{\rm Prop}.
\end{equation}
The spectral analysis is borrowed from Lemma 3.11 \cite{aharonov2008adiabatic} by setting $s=1$ therein. 

\begin{corollary}[StoqMA-completeness]\label{corollary-stoqma-completeness}
     Geometrically two-local stoquastic LH-MIN is StoqMA-complete in two and one dimensions with 14 and 19 state particles, respectively.
\end{corollary}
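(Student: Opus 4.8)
The plan is to establish both directions of StoqMA-completeness. Containment is immediate: Bravyi, Bessen and Terhal \cite{bravyi2006merlinarthur} showed that stoquastic LH-MIN$_{a,b}$ lies in StoqMA$_{\alpha,\beta}$ for an inverse-polynomial promise gap, and since a geometrically two-local stoquastic Hamiltonian is a special case of an algebraically local stoquastic Hamiltonian, the membership is inherited verbatim with the $1/\poly(n)$ gap preserved. The substance of the corollary is therefore the hardness direction, for which the strategy is to reuse the geometric circuit-to-Hamiltonian reductions of \Cref{theorem:2d-ma-hard,theorem:1D-MA-hardness} essentially unchanged, since a StoqMA verifier uses precisely the Toffoli-and-$X$ gate set that those constructions already encode.

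First I would take an arbitrary StoqMA verification circuit and feed it through the same two- or one-dimensional encoding, producing the geometrically two-local stoquastic operator $H_{\rm Init}+H_{\rm Prop}+H_{\rm Penalty}$ together with a modified output term. The one genuine change from the MA case is $H_{\rm Final}$: because StoqMA declares acceptance when the output qubit is measured to be $\ket{+}$, the rejecting subspace is now $\ket{-}$, so I would set $H_{\rm Final}=\ketbra{-}{-}=\tfrac{1}{2}(\mathbbm{1}-X)$ on the output site. This operator has its single off-diagonal entry equal to $-\tfrac{1}{2}$, hence is stoquastic, and since the output site sits adjacent to the final clock position it remains geometrically two-local; thus the full Hamiltonian stays both stoquastic and nearest-neighbour.

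The key analytic step is that StoqMA, unlike MA, cannot amplify its completeness and soundness parameters, so the history-state expectation of $H_{\rm Final}$ cannot be pushed exponentially close to $0$ or $1$. Following \cite{bravyi2006merlinarthur}, I would instead treat $H_{\rm Final}$ as a perturbation: write $H=A_1+\delta\,H_{\rm Final}$ with $A_1=H_{\rm Init}+H_{\rm Prop}$ as in \Cref{section:stoqma-completeness} (the penalty term $H_{\rm Penalty}$ only raises energies on illegal configurations and is handled exactly as the clock term in Lemma~3.11 of \cite{aharonov2008adiabatic}), and a scaling $\delta\ll\Omega(1/T^3)$. The unperturbed $A_1$ has ground energy $0$ with ground space spanned by the valid history states and a spectral gap $\Omega(1/T^3)$ above it, by Lemma~3.11 of \cite{aharonov2008adiabatic} specialised to $s=1$; this is the same bound established in \Cref{subsection:soundness-proof}. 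First-order time-independent perturbation theory then shifts the ground energy to $\delta\,\lambda_{\min}\!\big(\Pi_0 H_{\rm Final}\Pi_0\big)+O(\delta^2 T^3)$, where $\Pi_0$ projects onto the history-state ground space. Restricted there, $\Pi_0 H_{\rm Final}\Pi_0$ evaluates the verifier's rejection probability up to the $1/(T+1)$ history-state normalisation, so yes- and no-instances are separated at first order by $\Theta(\delta/T)$ times the inverse-polynomial StoqMA promise gap, which is itself $1/\poly(n)$.

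The main obstacle I anticipate is controlling the perturbative expansion: one must choose $\delta=1/\poly(n)$ small enough that the second-order correction $O(\delta^2/\text{gap})=O(\delta^2 T^3)$ is dominated by the first-order splitting $\Theta(\delta/T)$, while keeping $\delta$ large enough that the splitting stays inverse-polynomial. Taking $\delta\ll 1/T^4$ (so that $\delta^2 T^3\ll \delta/T$) achieves exactly this balance, which is the bookkeeping already carried out in \cite{bravyi2006merlinarthur,aharonov2008adiabatic} and which I would import rather than rederive. The remaining checks — that $\Pi_0 H_{\rm Final}\Pi_0$ genuinely reproduces the $\ket{+}$-acceptance probability, and that every term is stoquastic and two-local — are routine given the structure already built for \Cref{theorem:2d-ma-hard,theorem:1D-MA-hardness}, and the identical argument applies in both the $14$-state two-dimensional and the $19$-state one-dimensional constructions.
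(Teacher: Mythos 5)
Your proposal is correct and takes essentially the same route as the paper: both reduce to the first-order perturbative argument of Bravyi--Bessen--Terhal \cite{bravyi2006merlinarthur}, treating the output term as a perturbation $\delta H_{\rm Final}$ with $\delta \ll 1/T^3$ on top of the unperturbed $H_{\rm Init}+H_{\rm Prop}$ (with the penalty/clock term absorbed harmlessly since it annihilates valid history states and is positive semi-definite), and both invoke the $\Omega(1/T^3)$ spectral gap via Lemma~3.11 of \cite{aharonov2008adiabatic} at $s=1$. Your explicit $\ketbra{-}{-}$ form of the output term, the $\delta \ll 1/T^4$ bookkeeping, and the containment direction are details the paper leaves implicit or imports by reference, but they do not change the argument.
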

\begin{proof}
    The proof almost exactly follows from Appendix. 3 in Ref.\ \cite{bravyi2006merlinarthur}. Let $A_1 \coloneqq H_{\rm Init} + H_{\rm Prop} + H_{\rm Clock}$ and the perturbation Hamiltonian to be $A_2 = \delta H_{\rm Out}$ where $\delta \ll 1/T^3$. Now, compared to the setting of Ref.\ \cite{bravyi2006merlinarthur}, we only have an additional $H_{\rm Clock}$ term in the ``unperturbed'' Hamiltonian, $A_1$, that has the valid history state as its zero eigenstate.
    So the completeness analysis in Ref.\ \cite{bravyi2006merlinarthur} still holds. Soundness also simply follows since $\bra{\psi}H_{\rm Clock}\ket{\psi} \geq 0$ for any quantum state vector $\ket{\psi}$. The desired statement follows by noting that these two properties of $H_{\rm Clock}$ hold in both our 2D and one-dimensional constructions. 
\end{proof}

\section{Conclusion and future work}

To understand the computational complexity
of physically meaningful 
families of Hamiltonians is one of the core aims of the field of
Hamiltonian complexity, located at the intersection of 
theoretical computer science and quantum many-body physics. 
In this work, we studied the hardness of geometric stoquastic local Hamiltonian problem in two natural readings, for a two-dimensional grid and on a one-dimensional line. For
both settings, we prove that it remains MA-hard if the particles have sufficiently high local dimension. From a Physics perspective, this may be seen as a surprising result: After all, such systems are substantially closer to the reality of physically plausible sign-free Hamiltonians, and hence 
this maintained hardness may come as a surprise. Also, the sign-free geometrically local 
Hamiltonians are commonly considered the ``easy'' 
instances of Hamiltonians from the perspective of Monte Carlo sampling, so viewed from this perspective, it may be surprising to see that the approximation 
of their ground state energy is MA-hard. It may be worth noting that the general question of the computational complexity of bringing geometrically local Hamiltonians into
a manifestly stoquastic form is still open.

We think that one can further reduce the dimension of the particles in the two-dimensional construction by using the ideas from the one-dimensional construction, and hence
bring the families of systems considered closer to plausible models
from the condensed matter physics context: Only having the primary working qubit (the active state) to be of high enough dimension. Alternatively, only applying two qubit gates would also significantly reduce particle dimension. For that to work, it would be imperative to find two qubit gates that are classically reversible, universal  and preserve stoquasticity in the construction. 

It would also be interesting to
bring this type of study into contact with efforts of understanding
the computational complexity of
two-local qubit Hamiltonians
\cite{Cubitt}. Indeed, 
arguably, the physically best motivated and theoretically interesting is the case where the particle dimension is $d=2$, i.e., the qubit case. We expect that stoquastic perturbation gadgets would be useful to address the qubit case. Another
presumably ``easier'' version of the problem is the translationally invariant case \cite{PhysRevA.76.030307}, where all the local Hamiltonian terms are the same on all sites. It would be exciting to see if the problem persists to be MA$_{\text{EXP}}$-hard \footnote{In the translationally invariant setting, the input is the number of particles $n$, which requires $\log n$ bits to specify. Thus any polynomial-in-$n$ algorithm is exponential in the input size.} in the translationally invariant setting, again closer to physical reality. One could also think of the complexity of the 
variant of the problem in which 
certain qubits are pinned \cite{PhysRevA.103.012604},
where pinning refers to fixing the state of a small subsystem.
On a higher level, it is important to bring results in Hamiltonian complexity closer to the physical reality of natural quantum many-body systems. It is the hope that the present work that takes new  
steps in this direction stimulates further such efforts along these lines.

\section{Acknowledgements}
We are grateful to the anonymous reviewers for constructive feedback that improved the presentation of this work. We further thank the ERC (DebuQC), the QuantERA (HQCC), and the DFG (CRC 183), 
Berlin Quantum, the Munich Quantum Valley, and the BMFTR
(QuSol, Hybrid++) for support.
ABG is supported by ANR
JCJC TCS-NISQ ANR-22-CE47-0004 and by the PEPR integrated
project EPiQ ANR-22-PETQ-0007 part of Plan France 2030. This work was done in part while ABG was
visiting the Simons Institute for the Theory of Computing. Part of this work was done while AR was visiting ABG at LIP6, Sorbonne Universit{\'e}/CNRS funded by ANR
JCJC TCS-NISQ ANR-22-CE47-0004.

\printbibliography[heading=bibintoc]

\newpage
\begin{appendix}

\section{Two-dimensional construction: Illegal clock states}\label{sec-illegal-states}

\begin{table}[H]
\begin{tabular}{|l|l|}
\hline
Forbidden  & Guarantees \\ \hline \hline
   $\fstateelse \dstated$ & Any state that is not 
   dead should always be on the right of dead states. \\ \hline
   $\stateda \fstateelse$ & Unborn states should always be on the right of all the other states. \\ \hline
   $\dstated \stateda$ & Dead and alive states must not be horizontally adjacent. \\ 
   \hline
   $\dstatecc \dstatecc$, $\dstatecb \dstatecb$,&\\
   $\dstatebb \dstatebb$, $\dstatebb \dstatecb$,&\\
   $\dstatebb \dstatecc$, $\dstatecb \dstatebb$,&\\
   $\dstatecb \dstatecc$, $\dstatecc \dstatebb$,&\\
   $\dstatecc \dstatecb$ & Computational "substate" states must not be horizontally adjacent. \\
   \hline
   $\ontop{\fstateelse}{\stateda}$, $\ontop{\stateda}{\fstateelse}$ & Only $\stateda$ should be above and below itself.
    \\
   \hline
   $\ontop{\dstated}{\fstateelse}$ & Only $\dstated$ can be below itself.\\ \hline
   $\ontop{\dstatebb}{\dstated}$, $\ontop{\dstatecb}{\dstated}$ & Only $\dstatecc$ or $\dstated$ can be above $\dstated$.\\ \hline
   $\ontop{\dstatebb}{\fstateelse}$ & Only $\dstatebb$ can be below itself.  \\ \hline
    $\ontop{\dstated}{\dstatebb}$, $\ontop{\dstatecc}{\dstatebb}$,
   $\ontop{\dstatecb}{\dstatebb}$ &  $\dstatebb$ either allows itself or $\stateda$ in its immediate upper row. \\ 
   
   \hline
   $\ontop{\stateda}{\dstatecc}$, $\ontop{\dstated}{\dstatecc}$, $\ontop{\dstatebb}{\dstatecc}$,&\\
    $\ontop{\dstatecb}{\dstatecc}$, $\ontop{\stateda}{\dstatecb}$, $\ontop{\dstated}{\dstatecb}$,&\\
    $\ontop{\dstatebb}{\dstatecb}$, $\ontop{\dstatecb}{\dstatecb}$
  & Only $\dstatecc$  occurs above $\dstatecc$ and/or $\dstatecb$.\\
   \hline
\end{tabular}
    \caption{Table of illegal states on a two-dimensional grid between neighbouring particles that can be penalized using only two-local Hamiltonian terms. Here, $\protect\fstateelse$ is a placeholder for all states except the listed neighbouring state.}
    \label{tab-inter-forbidden}
\end{table}

\vspace*{-6mm}

\section{Illegal states in the one-dimensional construction}\label{sec-
illegal-states-1D}
\begin{table}[H]
\begin{tabular}{|l|l|}
\hline
Forbidden  & Where $\stateelseboxed$ is \\ \hline \hline
   $\stateaboxed \hspace{1mm} \big\Vert \hspace{1mm}\stateelseboxed$ & Anything except $\stateaboxed$. \\ \hline
   $\stateelseboxed \hspace{1mm} \big\Vert \hspace{1mm}\statedboxed$ & Anything except $\statedboxed$. \\ 
   \hline
   $\statebhboxed \hspace{4mm}\stateelseboxed$ & Anything except $\statebhboxed$.  \\ 
   \hline
   $\statebhboxed \hspace{1.5mm} \big\vert \hspace{1.5mm}\stateelseboxed$ & Anything except $\dstatebb$ or $\stateaboxed$.  \\ 
   \hline
  $\stateelseboxed
  \hspace{4mm}\statebhboxed $ & Anything except $\statebhboxed, \dstatecc, \dstatebb, \lflag, \rflag,$ or $\turnflag$.  \\ 
   \hline
   $\statehrlboxed
  \hspace{4mm}\stateelseboxed $ & Anything except $\dstatecc, \dstatebb, \lflag,$ or $\rflag$. \\ 
   \hline
  $\stateelseboxed
  \hspace{4mm}\statehrlboxed $ & Anything except $\statehrlboxed$.  \\ 
   \hline
  $\stateelseboxed
  \hspace{1.5mm} \big\vert \hspace{1.5mm}\statehrlboxed $ & Anything except $\dstatebb$ or $\statedboxed$.  \\ 
   \hline
    $\dstatebb
  \hspace{4mm}\stateelseboxed $ & Anything except $\stateaboxed$ or $\statebhboxed$.  \\ 
  \hline
  $\dstatebb
  \hspace{1.5mm} \big\vert \hspace{1.5mm}\stateelseboxed $ & Anything except $\statehrlboxed, \rflag$, or  $\lflag$.  \\ 
  \hline
  $\stateelseboxed
  \hspace{4mm}\dstatebb $ & Anything except $\statehrlboxed$, or  $\statedboxed$.  \\ 
  \hline
   $\stateelseboxed
  \hspace{1.5mm} \big\vert \hspace{1.5mm}\dstatebb $ & Anything except $\statebhboxed, \rflag, \lflag$, or  $\turnflag$.  \\ 
  \hline
  $\dstatecc
  \hspace{4mm}\stateelseboxed $ & Anything except $\statebhboxed$.  \\ 
  \hline
  $\dstatecc
  \hspace{1.5mm} \big\vert \hspace{1.5mm}\stateelseboxed $ & Anything except $\stateaboxed$.  \\ 
  \hline
   $\stateelseboxed
  \hspace{4mm} \dstatecc$ & Anything except $\statehrlboxed$.  \\
  \hline
   $\stateelseboxed
   \hspace{1.5mm} \big\vert \hspace{1.5mm} \dstatecc$ & Anything except $\statedboxed$.  \\
   \hline
    $\rflag
  \hspace{4mm}\stateelseboxed $ & Anything except $\statebhboxed$, or $\stateaboxed$.  \\ 
  \hline
    $\rflag
  \hspace{1.5mm} \big\vert \hspace{1.5mm}\stateelseboxed$, $\stateelseboxed
  \hspace{1.5mm} \big\vert \hspace{1.5mm}\rflag $  & Anything except $\dstatebb$. \\ 
  \hline
   $\stateelseboxed
  \hspace{4mm}\rflag $ & Anything except $\statehrlboxed$, or $\statedboxed$. \\ 
  \hline
   $\lflag
  \hspace{4mm}\stateelseboxed $ & Anything except $\statebhboxed$.  \\
  \hline
   $\lflag
    \hspace{1.5mm} \big\vert \hspace{1.5mm}\stateelseboxed $ & Anything except $\dstatebb$.  \\
  \hline
  $\stateelseboxed
  \hspace{4mm}\lflag $ & Anything except $\statehrlboxed$, or $\statedboxed$.  \\
  \hline
  $\stateelseboxed
  \hspace{1.5mm} \big\vert \hspace{1.5mm}\lflag $ & Anything except $\dstatebb$, or $\statedboxed$.  \\
  \hline
   $\turnflag
  \hspace{4mm}\stateelseboxed $ & Anything except $\statebhboxed$.  \\
  \hline
  $\turnflag
    \hspace{1.5mm} \big\vert \hspace{1.5mm}\stateelseboxed $ & Anything except $\dstatebb$.  \\
  \hline
  $\stateelseboxed
    \hspace{4mm}\turnflag$, $\stateelseboxed
    \hspace{1.5mm} \big\vert \hspace{1.5mm}\turnflag$ & Anything except $\statedboxed$.  \\
  \hline
  $\stateelseboxed
   \hspace{1.5mm} \big\vert \hspace{1.5mm}\statebhboxed$ & Any state. \\
   \hline
   $\statehrlboxed
  \hspace{1.5mm} \big\vert \hspace{1.5mm}\stateelseboxed$ & Any state. \\
  \hline
   
\end{tabular}
 \caption{Prohibited states in one spatial dimension.}
    \label{tab:1D-prohibited-states}
\end{table}

\end{appendix}
\end{document}